\newlist{boxitemize}{itemize}{1}
\setlist[boxitemize,1]{label=\textbullet,leftmargin=2mm}
\newcommand{\blind}{0}
\newcommand{\hatmusc}[0]{\hat\mu_{SC}}
\numberwithin{equation}{section}
\theoremstyle{plain}
\newtheorem{thm}{Theorem}[section]
\newtheorem{principle}{Principle}[section]
\begin{document}

\def\spacingset#1{\renewcommand{\baselinestretch}%
{#1}\small\normalsize} \spacingset{1}

\def\B0{\mbox{\boldmath $0$}}
\def\Ba{\mbox{\boldmath $A$}}
\def\Bb{\mbox{\boldmath $B$}}
\def\Bc{\mbox{\boldmath $C$}}
\def\Bd{\mbox{\boldmath $D$}}
\def\Bl{\mbox{\boldmath $L$}}
\def\Bn{\mbox{\boldmath $N$}}
\def\Br{\mbox{\boldmath $R$}}
\def\Bt{\mbox{\boldmath $T$}}
\def\Bv{\mbox{\boldmath $V$}}
\def\Bx{\mbox{\boldmath $X$}}
\def\By{\mbox{\boldmath $Y$}}
\def\Bz{\mbox{\boldmath $Z$}}

\def\btheta{\mbox{\boldmath $\theta$}}
\def\htheta{\mbox{$\hat{\theta}$}}

\def\balpha{\mbox{\boldmath $\alpha$}}
\def\bbeta{\mbox{\boldmath $\beta$}}
\def\bdelta{\mbox{\boldmath $\delta$}}
\def\bepsilon{\mbox{\boldmath $\epsilon$}}
\def\bGamma{\mbox{\boldmath $\Gamma$}}
\def\blambda{\mbox{\boldmath $\lambda$}}
\def\bLambda{\mbox{\boldmath $\Lambda$}}
\def\bmu{\mbox{\boldmath $\mu$}}
\def\bSigma{\mbox{\boldmath $\Sigma$}}
\def\btau{\mbox{\boldmath $\tau$}}
\def\btheta{\mbox{\boldmath $\theta$}}

%%%%%%%%%%%%%%%%%%%%%%%%%%%%%%%%%%%%%%%%%%%%%%%%%%%%%%%%%%%%%%%%%%%%%%%%%%%%%%

\if0\blind
{
  \title{Principles of Conditionality and Layering of Error Rates with Application to Platform Trials}
  \author{
  Xinping Cui, University of California, Riverside\\
		and \\
  Emily Ouyang, University of California, Riverside \\
  and \\
  Yi Liu, Nektar Therapeutics\\
and \\
  Jingjing Yan Schneider, BeiGene\\
		and \\
  Hong Tian, BeiGene\\
		and \\
  Bushi Wang, IO Biotech\\
	%  Haiyan Xu, Clinical Biostatistics, Janssen Research \& Development, Johnson \& Johnson\\
        and \\
  Jason C. Hsu, The Ohio State University}
  \maketitle
  Correspondence:  Xinping Cui (\texttt{xinping.cui@ucr.edu})

} \fi

\if1\blind
{
  \bigskip
  \bigskip
  \bigskip
  \begin{center}
    {\LARGE\bf Title}
\end{center}
  \medskip
  
} \fi

%\bigskip
%\pagebreak
\begin{abstract}

There has been a misconception that only one type of error rate control is necessary in clinical trials, leading to debates over whether to prioritize Familywise Error Rate (FWER) or False Discovery Rate (FDR). This misconception has led to misleading statements about FWER control and proposals to shift towards FDR control, which could be manipulated by the industry.
    
    In reality, since the early 2000s, biopharmaceutical statistics have implicitly applied two layers of Type I error rate control. This aligns with Tukey's 1953 invention of Error Rate per Family (ERpF) for controlling error across studies, while FWER applies within each study. Our paper clarifies this layering, using Platform trials to demonstrate the verifiable conditions needed across studies for the FDA to fulfill its regulatory mission.
    
    We show that controlling FWER within a study at $5\%$ inherently controls ERpF across studies at 5-per-100, regardless of study correlations. This supports current regulatory practices that protect public health while fostering innovation. We also address concerns about ERpF stability in Platform trials, where shared controls introduce dependencies. By applying the Conditionality Principle and utilizing an innovative Shiny app, we explore how correlations impact ERpF variability, providing deeper insights for informed decision-making.
    
    Our findings, supported by principles like Layering of Error Rate Controls and the No-gaming Principle, are particularly relevant as Platform trials gain popularity for their efficiency in testing multiple treatments simultaneously.

% Many modern trials have complicated structures or designs that can create instances that invalidate marginal, unconditional inference. This paper develops a unifying explanation for many of such failures based on the Conditionality Principle, which requires inference to condition on the realized experiment and dependence structure.

% Two notable examples are discussed in this paper, as 
	\end{abstract}

\noindent%
{\it Keywords:}  Platform trials; Error Rate per Family (ERpF); Error Rate Familywise (ERFw).
% \vfill

\spacingset{1.45} % DON'T change the spacing!

% \textcolor{blue}{.}

% \pagebreak

\section{Introduction}
Clinical trials are, and have long been, an integral part of biomedical and pharmaceutical research. But as clinical study designs continue to evolve, the complexity of these trials has increased dramatically. Modern frameworks, including adaptive designs, platform trials, and multistage testing, introduce additional layers of structure, dependence, and decisions. Many classical methods of clinical trial analysis were developed under simplifying assumptions that are no longer appropriate in these settings. When applied naively, such classical procedures can produce inferences that are logically inconsistent and may fail to maintain error rate control.

While these inferential difficulties are often attributed to  multiplicity or dependence, we argue that the fundamental root is a failure to condition on the experiment actually performed. As we will show in this paper,  for complex modern clinical trials such as platform trials, if a shared control group happens to under-perform (randomly shows poor outcomes), it lowers the bar for all drugs in that trial. While a 5$\%$ ``unconditional'' false approval rate (FAR) provides a necessary long-run guarantee for regulatory policy, the actual risk for this trial might be 20$\%$. We further propose to apply Conditionality Principle to such trials in order to ensures that statistical evidence remains persuasive by focusing on the specific data-generating process of the trial at hand, protecting against the "luck of the draw" that makes marginal averages misleading for individual clinical decisions. Based on this principle, we recommend that regulators and sponsors do not just accept the unconditional guarantee but also perform a conditional assessment. 

%marginal or unconditional inference often loses interpretability because it does not align with the experiment that actually generated the data.

% A unifying explanation for many of these failures, and for why many new modern solutions succeed, 
%This observation is formalized by a foundational frequentist idea: the Conditionality Principle.

    \begin{principle}[The Conditionality Principle]\label{principle:Contitionality}
	If, among many experiments that can inform on an unknown parameter, one is randomly selected and performed, then the information available about the unknown parameter depends only on the experiment actually performed.  
\end{principle}
\noindent See Berger (\citeyear{Berger(1985)}) page 31 and Casella and Berger (\citeyear{Casella&Berger(1990)}) page 268 for details.

Separate from these inferential concerns, there is the regulatory goal of controlling False Approval Rate (FAR). To make explicit the multi-level nature of regulatory decision making, we propose the Principle of Layering of error rates, which emphasizes that error rates should be controlled at each layer of decision making which is detailed in Section \ref{Sec:methods}.

The remainder of the article is organized as follows. Section \ref{Sec:methods} defines the regulatory goal of controlling the False Approval Rate (FAR). We propose the Principle of Layering of error rates, which should be controlled at each layer of decision-making. We also prove that if a regulator controls the Error rate familywise (ERFw) at 5$\%$ within every individual study, the long-run Error Rate per Family (ERpF), i.e., FAR,  across all studies will also be 5-per-100 on average.  Section \ref{Sec:conditionality} demonstrates that a 5$\%$ "unconditional" FAR is not enough to ensure the stability of a single platform trial, and proposes a conditional assessment of FAR through applying the Conditionality Principle and the Confident Platform Random Sizer App that we developed. Section~\ref{Sec:FDR} discusses limitations of False Discovery Rate (FDR) in a regulatory setting.  We conclude with a discussion in Section \ref{Sec:discussion}.

\section{Layering of Error Rates}
\label{Sec:methods}

In this section, we describe the layered error-rate framework underlying current regulatory practice. While elements of this framework have appeared in prior work, we present it here in a unified form that distinguishes error control within studies from error control across studies. 

%This distinction helps motivate the conditional analyses developed in subsequent sections.  \textcolor{blue}{We need to confirm with Jason about this statement. According to Jason, layered error rate is not the motivation of the conditional inference; \textbf{confirmed}}

\subsection{Incorrect Approval Rate (IAR) %Error rates 
in a regulatory setting}

%\section{Layering of error rates in a regulatory setting}
%\section{Type I error rate control in a regulatory setting}

% For regulatory agencies like the European Medical Agency (EMA) and the United States Food and Drug Administration (FDA), the primary purpose of Type I error rate control is to control the rate of approving a treatment (e.g., drug) when it is not efficacious, %across infinitely many studies (which are not identical, nor necessarily independent), 
% which we refer to as False Approval Rate (FAR) control. 

% To balance patient benefits and risks, the US FDA requires pharmaceutical companies to provide evidence of a new drug’s \textit{safety} and \textit{efficacy} before it is approved for public use, for which the FDA permits a maximum of five incorrect approvals for every 100 ineffective compounds tested, effectively a 5-per-100 rate. Rather than attempting to control this rate directly across various study designs, the FDA instead mandates controlling familywise error rate at  $5\%$ within each $\textit{study}$. 

For regulatory agencies like the European Medical Agency (EMA) and the United States Food and Drug Administration (FDA), the primary purpose of Type I error rate control is to limit the rate of approving a treatment lacking efficacy (e.g., drug), %across infinitely many studies (which are not identical, nor necessarily independent), 
which we refer to as Incorrect Approval Rate (IAR) control. In the United States, the FDA currently permits at most five incorrect approvals per one hundred ineffective compounds studied, corresponding to an IAR of 5-per-100.  Rather than attempting to control this error rate directly across various study designs, the FDA instead mandates that each individual study controls its familywise error rate at  $5\%$ within each $\textit{study}$. In enforcing this mandate, control of IAR at $5\%$ across studies is also achieved. To understand what this means, it is important to define the different layers of error-rate control.

\subsubsection{Two layers of families and error rates}

In many modern inferential settings, when there are decisions to be made both \textit{within} and \textit{across} studies, two layers of families may exist. As such, there are two different error rates we need to consider: 

\begin{description}
	\item[Within study] Within each study, there may be multiple \textit{doses} and/or \textit{endpoints}.  
	The \textit{within-study} family of inferential statements is the collection of (prespecified) statements to be made about them (e.g. confidence intervals for low dose efficacy in the primary endpoint and high dose efficacy in the secondary endpoint).  
	\item[Across studies] %The outcome of each study is whether the study compound is approved or not.  
 Across all studies, the outcome of each study is whether the study compound is approved or not. The \textit{across-study} family of inferential statements is the collection of statements to be made about whether the test compound of each study in the family has efficacy or not. 
 
 %some study compounds are approved and some are not.  So to control the rate at which compounds that should not be approved are incorrectly approved by regulators, the family is a family of \textit{studies}, the family of statements about whether the test compound of each study in the family has efficacy.   
\end{description}

These layers correspond to different stages at which information is revealed and decisions are made. %and therefore naturally invoke district conditioning regimes. 
For instance, \citet{DingEtAl(2018)} has a group of inferences (lower layer family) comprising four simultaneous confidence intervals for each single nucleotide polymorphism (SNP), assessing treatment efficacy across three possible genotypes. At the higher layer, there are a panel of inferences comprising confidence statements across all SNPs on a genome-wide scale, evaluating whether each SNP is ``predictive"  of treatment efficacy. 
%linked to the phenotype, thus demonstrating the efficacy of a new treatment in comparison to a control. 
As another example, \citet{HanTangLinHsu(2022)} has a lower layer family of three confidence intervals for each biomarker value, assessing  the efficacy of a new treatment versus a control within three distinct groups: the marker-positive $ g^+ $ subgroup, marker-negative $ g^- $ subgroup, the overall population $ \{g^+, g^-\} $. At the higher layer, there are confidence statements across uncountable many biomarker values.

In platform trials, within each regimen, there exists a family of efficacy assessment on primary and secondary outcomes. For example, the HEALEY ALS Platform Trial has the primary outcomes (e.g., ALSFRS-R total score disease progression) and secondary outcomes (e.g.,  respiratory function). Subsequently the regimens in the platform trial form a higher layer of family of studies, each of which either succeeds or fails in claiming efficacy of the compound compared to the placebo.

\begin{principle}[Principle of Layering of error rate controls]\label{principle:layering}
	When there are multiple layers of families of statistical inferences, error-rate control can be applied in layers, to control the incorrect decision rate at each layer as well as for the overall decision-making process.  
\end{principle}

Because decision-making occurs sequentially at multiple inferential levels—both within and across studies — separate error-rate controls are needed at each layer, necessitating layering of error rates.
%Because decisions should be made sequentially and conditional on different pieces of information that is revealed, error-rate control must be evaluated at the correct layer of inference.

%Because decisions are made both within and across studies, separate error-rate guarantees are needed at each layer, necessitating layering of error rates.

\subsubsection{Tukey's three Multiple Comparison error rates}

% It can perhaps be said that Tukey (1953)\nocite{Tukey(1953)}, which has been reprinted as Tukey (1994)\nocite{Tukey(1994)}, started Multiple Comparisons as a discipline in Statistics.  
% The form of statistical inference that John W. Tukey preferred was confidence interval.  
% He defined the following error rates: 
Tukey (1953; reprinted in 1994) formalized multiple comparisons by framing inference in terms of confidence intervals and introducing several distinct error rates.

\begin{description}
	\item[error rate per comparison] $ = \frac{\mbox{number of erroneous statements}}{\mbox{number of comparisons}} $
	\item[error rate per family] $ = \frac{\mbox{number of erroneous statements}}{\mbox{number of families}} $
	\item[error rate familywise] $ = \frac{\mbox{number of erroneous families}}{\mbox{number of families}} $ % 24March2021
\end{description}
In these definitions, each ``statement'' is a confidence interval, and an erroneous statement is a confidence interval that does not cover the true parameter value.  
An erroneous family is a family with at least one confidence interval not covering its true parameter value.

\subsubsection{Error Rate Familywise (ERFw) control guards %against 
regulatory Incorrect %regulatory
Approval Rate (IAR)}

Within a study (a clinical trial), an incorrect statement about a primary endpoint can lead to an incorrect approval of a compound, and an incorrect statement about a secondary endpoint can lead to an incorrect claim of additional efficacy on the drug label.  
Since any incorrect statement within a study can lead to a (regulatory) incorrect decision, this has led to controlling error rate \textbf{familywise} (ERFw) \textit{within each study} since the early 2000s. %We will show that \textit{across studies}, controlling error rate \textbf{per family} (ERpF) appropriately balances the risk and benefit to patients, regardless of whether the studies are independent or correlated in families of platform trials. 
%studies.  

Within a platform trial, different treatment arms  may share a common control arm and therefore are not independent. This dependence is \textit{short-term}, meaning that studies are correlated within the groups of studies that share data or analysis infrastructure, while studies without such sharing are independent. Across the infinitely many studies in independent trials and in platform trials, the following theorem allows us to show if the ERFw of each study  is set at 5\% level, the aforementioned dependence does not inflate the regulatory IAR beyond 5-per-100 (or 5\%), as long as dependence is short-term (within each platform trial only, while studies in different platform trials remain
independent).

% Across the infinitely many studies in independent trials and in platform trials, the following theorem allows us to show if error rate familywise 
% (ERFw) of every study within each trial is set at the 5\% level, % as is current practice
% % within each Platform trial only, while studies in different Platform trials remain independent.   
% the aforementioned dependence does not affect regulatory false approval rate being no more than 5-per-100 (or 5\%), 
% so long as dependence is \textit{short-term} (within groups with shared infrastructures, while studies without this remain independent ).
% %(within each platform trial only, while studies in different platform trials remain independent, and ).
We now establish an unconditional guarantee for error-rate control across studies, demonstrating that layered error-rate control remains valid in the long run even under short-term dependence.

\begin{thm}[Law of Large Numbers for random variables with short term dependence]\label{thm:SLLNcorrelated}
	Let $ X_1 , X_2, \ldots, $ have means $ \mu_1, \mu_2, \ldots $ variances $ \sigma^2_1, \sigma^2_2, \ldots $ and covariance $ Cov\{X_i, X_j\} $ satisfying 
	\begin{equation}\label{eq:LocalDependence}
		Cov\{X_i, X_j\} \le \rho_{j-i} \sigma_i \sigma_j ~ (i \le j)
	\end{equation}
	where $ 0 \le \rho_{m} \le 1  $ for all $ m = 0, 1, \ldots $. 
	If the series $ \sum_{1}^\infty \rho_m $ and $ \sum_{1}^{\infty} \sigma^2_i (\log i)^2 / i^2 $ are both convergent, then as $ n \rightarrow \infty $
	\begin{equation}
		\frac{1}{n} \sum_{i=1}^{n} X_i - \frac{1}{n} \sum_{i=1}^{n} \mu_i \rightarrow 0 ~{w.p. 1}.
	\end{equation}
\end{thm}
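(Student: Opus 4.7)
The plan is to reduce the theorem to almost sure convergence of a random series via Kronecker's lemma, and then prove that convergence by combining the short-range-dependence hypothesis with a Rademacher--Menshov type maximal inequality for quasi-orthogonal sequences. Writing $Y_i = X_i - \mu_i$, the goal becomes $\frac{1}{n}\sum_{i=1}^{n} Y_i \to 0$ w.p.~1, and by Kronecker's lemma applied with weights $b_n = n$ it suffices to establish that
\[
    \sum_{i=1}^{\infty} \frac{Y_i}{i}
\]
converges almost surely.

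Next I would quantify the effective orthogonality of the terms $\xi_i := Y_i/i$. The hypothesis (\ref{eq:LocalDependence}) gives $|\mathrm{Cov}(\xi_i,\xi_j)| \le \rho_{|j-i|}(\sigma_i/i)(\sigma_j/j)$, and an AM--GM split $(\sigma_i/i)(\sigma_j/j) \le \tfrac{1}{2}[(\sigma_i/i)^2 + (\sigma_j/j)^2]$ together with the hypothesis $R := \sum_{r=1}^{\infty}\rho_r < \infty$ yields, for every block $m \le k$,
\[
    \mathrm{Var}\!\left(\sum_{i=m}^{k} \xi_i\right) \;\le\; (1+2R)\sum_{i=m}^{k} \frac{\sigma_i^2}{i^2}.
\]
Thus $\{\xi_i\}$ is quasi-orthogonal with a uniform Bessel-type constant $C := 1+2R$ that does not depend on the location or length of the block.

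The heart of the proof is a Rademacher--Menshov maximal inequality in the quasi-orthogonal setting, which I would obtain by the standard dyadic decomposition: any partial sum $\sum_{i=N}^{k}\xi_i$ is decomposed into at most $\lceil \log_2(k-N+1) \rceil$ dyadic blocks, the variance of each block is controlled by the quasi-orthogonality bound above, and Cauchy--Schwarz across the layers pays one factor of $\log$ on each side. This produces
\[
    E\!\left[\max_{N \le k \le M}\Big(\sum_{i=N}^{k}\xi_i\Big)^2\right] \;\le\; C'\,(\log M)^2 \sum_{i=N}^{M} \frac{\sigma_i^2}{i^2}.
\]
By the second summability hypothesis, $\sum_{i} \sigma_i^2(\log i)^2/i^2 < \infty$, so the right-hand side tends to $0$ as $N \to \infty$ (and remains finite for $M = \infty$); Chebyshev plus a Borel--Cantelli argument along a suitable subsequence upgrades this $L^2$ control of the maximum to the almost sure Cauchy property of the partial sums $\sum_{i=1}^{k}\xi_i$.

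The main obstacle is the maximal inequality step. The classical Rademacher--Menshov argument is stated for exactly orthogonal sequences, and one must verify that the dyadic bookkeeping still closes when the cross-covariances are only controlled in aggregate by a summable $\{\rho_m\}$. The key observation that makes this go through is that the constant $C = 1+2R$ in the block variance bound is \emph{uniform} in the block's location and length, so the dyadic recursion incurs only a constant-factor loss. Once the almost sure convergence of $\sum_{i\ge 1} Y_i/i$ is established, Kronecker's lemma immediately delivers $\frac{1}{n}\sum_{i=1}^{n}(X_i - \mu_i) \to 0$ w.p.~1, which is the conclusion of the theorem.
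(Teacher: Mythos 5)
The paper does not actually prove this theorem: its ``proof'' is a one-line citation to Theorem E on page 27 of Serfling (1980), which in turn rests on Serfling's moment inequalities for maxima of cumulative sums. Your argument is, in substance, the standard proof behind that citation --- reduce to almost sure convergence of $\sum_i (X_i-\mu_i)/i$ via Kronecker's lemma, derive the uniform block-variance bound $(1+2R)\sum_{i=m}^{k}\sigma_i^2/i^2$ from the summability of $\{\rho_m\}$, and feed that into a Rademacher--Menshov type maximal inequality for quasi-orthogonal sums --- so the approach is correct and consistent with the cited source. One piece of bookkeeping to tighten: as written, your maximal inequality carries a global factor $(\log M)^2$ outside the sum, which is vacuous at $M=\infty$; the standard fix is to apply the inequality separately on dyadic blocks $[2^p, 2^{p+1})$, where the incurred factor is of order $(\log 2^{p+1})^2 \asymp (\log i)^2$ for $i$ in that block, so that summing over $p$ produces exactly the hypothesized convergent series $\sum_i \sigma_i^2(\log i)^2/i^2$; combined with the $L^2$-Cauchy property of the partial sums along $n = 2^p$, this yields the almost sure convergence you need. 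A last small point: the hypothesis (\ref{eq:LocalDependence}) bounds $\mathrm{Cov}(X_i,X_j)$ from above only, not in absolute value, but since the argument only ever requires upper bounds on variances of sums, this costs nothing.
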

\begin{proof}
	See Theorem E on page 27 of Serfling(\citeyear{Serfling(1980)}).
\end{proof}

%Consider lining up infinitely many Platform trials (so to speak), each with multiple treatment arms.  
%Reference the studies across these Platform trials by a single index $ i $, so if the first Platform trial  has 10 arms, and the second Platform trial has 20 arms, then $ i = 2 $ for the $ 2^{nd} $ treatment arm of the first Platform trial, while $ i = 17 $ for the $ 7{th} $ study arm of the second Platform trial, etc.
Even though individual studies and those in platform trials are conducted at overlapping times, we can think of them as occurring in a (countably) infinite sequence of clinical trials, indexed as Study $ \ldots , -1, 0, 1, \ldots $.  
Consider a finite number of $ K $ studies, some of which are in platform trials while others are not.  
Within these $ K $ studies, some compounds have efficacy while others do not.  
Studies with efficacious compounds that fail, i.e. failing to reject the null hypothesis of no-efficacy, do not lead to incorrect regulatory approval. 
We thus isolate those studies with compounds that lack efficacy (and therefore should not gain regulatory approval) as studies with index in $ N $, and let $ n $ denote its cardinality (the number of studies with index in $ N $).

Let $ \ldots , Y_{-1}, Y_0, Y_{1}, \ldots $ be the indicator functions of past, present, and future studies that lead to efficacy claims from rejection of the null hypothesis that the compound studied lacks efficacy.  
%Among studies that lead to efficacy claims, some do have efficacy, with correct rejection of the no-efficacy null hypotheses leading to correct regulatory approvals. % in which case we relabel the $ Y_i $ as $ Z_i $.  
%Some other studies, however, lead to incorrect regulatory approvals due to false rejections of the no-efficacy null hypotheses.  
Among studies with index in $ N $ that should not lead to regulatory approvals, we relabel their $ Y_i $ as $ X_i $, i.e., $ X_i $ is the indicator of a Type I error being committed as a result of Study $ i, ~ i \in N, $ a study with compound that lacks efficacy,
\begin{equation}
X_i =  \left\{ 
\begin{array}{ll}
	1 & \mbox{if a Type I error is made as a result of Study} ~i, i \in N,\\
	0 & \mbox{otherwise}.
\end{array}
\right. \label{eq.ErrorIndicator}
\end{equation}
Then $E(X_i)=\mu_i$ %\le 0.05 $ 
for all $ i \in N$. 
For $ X_i, X_j $ within the same platform trial, the correlation $ \rho_{j-i} $ in (\ref{eq:LocalDependence}) is between zero and one.  
But with short term dependence, for $ X_i, X_j $ in different platform trials, $ \rho_{j-i} = 0$.  
For example, if the practical maximum number of study arms within a single platform trial is 100,  then the $ \rho_{j-i} $ in (\ref{eq:LocalDependence}) would satisfy 

$$ \rho_{j-i} \left\{ 
\begin{array}{ll}
	\le 1 & \mbox{if } |j - i| \le 100\\
	= 0 & \mbox{if } |j - i| > 100 .
\end{array}
\right. $$
Let $ m $ = $ |j - i | $, then $ \rho_m = 0 , ~ m = 101, 102, \ldots $ for platform trials.  
Therefore the series $ \sum_{1}^\infty \rho_i $ is convergent.  
The variance convergence condition is also satisfied since $ \sigma^2_i = \mu_i (1-\mu_i) \le 1$ and $ \sum_{i=1}^{\infty} (\log i)^2 / i^2 $ is a convergent series. 

Let $V$ denote the number of studies incurring Type I errors in $N$, i.e., 
\begin{equation*}
V = \sum_{i \in N} X_i.   
\end{equation*} 
%where $X_i$ is the indicator variable, as defined in (2.3), for a Type I error occurring in Study $i$, with $i\in N $. 
Assuming $P(X_i = 1) = 5\%$ for every $ i \in N$, the expectation of $ V $ can be expressed as the sum of the probabilities for each $X_i$, leading to:
\begin{eqnarray}\label{eq:Additive}
E[V] & = & \sum_{i \in N} P(X_i=1) = 5\% \times n.
\end{eqnarray}
 %is as in (\ref{eq.ErrorIndicator}) the indicator that a Type I error is committed as a result of Study $ i, ~i \in N $.  

This \textit{additive}-$ \alpha $ multiplicity adjustment (\ref{eq:Additive}) for controlling error rate of 5 per a family of n studies is an \emph{exact equality} regardless of the dependency structure among the $ X_i $ variables, whether they are independent, positively dependent, or negatively dependent. 
It should in no way be confused with the Bonferroni \emph{inequality} multiplicity adjustment for the purpose of controlling ERFw.

%The same decisions-triggering logic arises in genome-wide association studies as well. A GWAS can be viewed as a single within-study family of multiple SNP-levle inferential statements, where strong ERFw control limits the probability of making a false association claim. Since a single significant association can affect dependent decisions, controlling the ERFw at level $\alpha$, the rate at which non-causal SNPs are advanced across studies, and thus the FAR, is guarded as well.

This same decisions-triggering logic applies to genome-wide association studies (GWAS) as well. For example, \citet{DingEtAl(2018)} applies ERFw control within each specific SNP as lower layer of inference. Across the thousands of SNPs in the GWAS panel ((higher layer), this framework controls the expected number of SNPs ($E[V]$) with at least one false confidence interval coverage . Crucially, because this uses an additive multiplicity adjustment, the error control remains exact regardless of whether SNPs have linkage disequilibrium (correlation) or not. This is identical to a platform trial where controlling the ERFw at 5$\%$ within each drug study automatically controls the long-run IAR at exactly 5-per-100.

Moreover, Theorem \ref{thm:SLLNcorrelated} assures that as the number of studies ($n$) in $N$ grows infinitely large, the proportion of studies incurring Type I errors ($V/n$) will converge to the probability of a Type I error in any single study (i.e., ERFw)—$5\%$ in this case. In other words, by \textit{additive}-$ \alpha $ multiplicity adjustment  (on line 360 of FDA(\citeyear{FDA(2023)})),
%among studies within each family when calculating the expected number of incorrect regulatory approvals ($V$) and Theorem \ref{thm:SLLNcorrelated}, 
controlling the Type I error rate familywise (ERFw) of each study \textit{exactly} at 5\% leads to controlling the error rate per family (ERpF) at \textit{exactly} 5\% for a family of $ n $ studies with in-efficacious compounds (which we phrase as 5-per-100).  Note that throughout this paper, ERpF and   
% \textcolor{red} {and n is sufficiently large}) 
% jch a rate is a probability or an expectation which automatically referes to infinitely many.  
More generally, regardless of whether studies in the families are independent or dependent, having an error rate per family \textit{no more than} $ 5\% \times n $ for a family of $ n $ studies means to expect \textit{no more than} one incorrect approval \textit{per} a \textit{family} of 20 studies, five incorrect approvals \textit{per} a \textit{family} of 100 studies, etc. Such precision in error control exemplifies the regulatory commitment to balancing risk and benefit for patients, challenging the perception suggested by Bai et al\citet{BaiEtAl(2020)}. 

We include this result to clarify while long-run regulatory guarantees remain valid, they do not by themselves ensure coherent inference when decisions are interpreted conditionally on the realized experiment.

\begin{principle}[The No-gaming Principle]\label{principle:No-gaming}
	A regulatory approval system should not allow sponsors to game it. 
\end{principle}

\noindent Principle 2.2 is a fundamental policy requirement for any regulatory approval system. For example, FDR control is not used to control ERFw within a clinical trial because (as already mentioned) it can be manipulated by sponsors.  
Similarly, if the requirement for ERpF control across studies were merely to control $ E[V] $ at  $ 5\% \times n$, then an individual sponsor attempting to gain approval for its compound $ {Rx}^1 $ can  game the system by conducting a Platform trial of its own (artificially) adding compounds $ {Rx}^i, i = 2,3,4 $ which are % highly effective but 
of no commercial value, controlling the ERFw of $ {Rx}^1 $ at 17\% while controlling the ERFw of each of $ {Rx}^i, i = 2,3,4 $ at 1\%, gaming the system.  
To adhere to the No-gaming Principle, industry should continue with the current practice of the additive-$ \alpha $ adjustment, controlling the ERFw of each study at $ \alpha $.

\section{Applying the Conditionality Principle in Practice}
\label{Sec:conditionality}

%In practice, statistical inference should be made after observing the outcome and structure of a specific experiment, not by averaging over hypothetical repetitions that were never realized. 

%This result establishes that while unconditional error guarantees describe long-run behavior under specific conditions, scientific and regulatory decisions are conditional on the experiments that occurred.

%The Conditionality Principle formalizes this distinction.

%We show that controlling ERFw within a study at 5$\%$ inherently controls unconditional ERpF (i.e., FAR) across studies at 5-per-100, regardless of study correlations, which supports current regulatory practices that protect public health. However, this "long-run" average error rate can hide extreme variability in individual cases. In this section, we propose to apply conditionality principle to assess the statistical evidence based only on the experiment actually performed for individual drug approvals.

 %\subsection{Stability of FAR and Conditional Assessment }

We have previously established that strong ERFw control within each study ensures valid unconditional control of the IAR %the ERpF, and hence FAR, 
across studies over time. However, in platform trials, regulators and stakeholders may also be concerned with behavior of the IAR within a given platform. 
% While reassuring that a 5\% ERFw within each study results in an ERpF of 5\% across \textit{all} the studies, there is a concern regarding the stability of ERpF across platform trials. 
This concern is reflected in FDA (2023) guidance. In section F (``Multiplicity") of chapter III (``Considerations On Design and Analysis,") of Food and Drug Administration (2023) guidance document,  it is explicitly stated that
\begin{quote}
the probability distribution for the number of type I errors should be considered both in evaluating a proposed design and analysis plan and in evaluating the persuasiveness of results.
\end{quote}
%It emphasizes that regulators are concerned not only with the expected number of false approvals (i.e. the unconditional IAR), but also with the variability and distribution of IAR within a given design (i.e., the conditional IAR).
%This section provides a clear demonstration of how to meet this requirement.
%\textcolor{red}{Lines 365 and 366 in Section F titled Multiplicity of \cite{FDA(2023)} states specifically the \emph{probability distribution} of the number of False Approval type I errors should be considered both in evaluating a proposed design and analysis plan and in evaluating the persuasiveness of results.  
%This section shows exactly how this requirement can be met.} % jch 10May2024
%The issue is not whether \textit{unconditional} \textbf{or}\textit{conditional} error rates are more meaningful, but that they address different aspects of the same inferential problem. Both should be assessed 
It emphasizes that regulators are concerned not only with the expected number of false approvals (i.e. the unconditional IAR), but also with the variability and distribution of IAR within a given design (i.e., conditional IAR).  A complete understanding of error behavior requires consideration of both the marginal (unconditional) and conditional perspectives, as together they characterize the joint distribution of IAR.

Motivated by this perspective, we examine platform-specific conditional IAR, calculated under realized design features such as the number of regimens   and the observed shared control outcome,  while preserving unconditional IAR control across an infinite sequence of studies.

\subsection{Platform Trials}

%{\color{red}Platform trials is a emerging design of clinical trials that has been increasing in popularity due to it is not only allowing for simultaneous testing of different treatments but also allowing for new available treatments to be added throughout the study period for continuous learning. It has the benefit of increasing efficiency of  resources and chances of study participants receiving a placebo. However, due to the novelty of this design method, there is still room for exploration in the field of platform trials. In particular, due to the variety of potential outcomes, there are statistical concerns.\nocite{KoenigetAl(2023)}}

Platform trials represent an emerging clinical trial design that has gained increasing popularity due to their ability to evaluate multiple treatments simultaneously  within a unified framework. %and provide a setting in which conditionality is necessary. 
%More significantly, they enable the continuous incorporation of new treatments during the study, facilitating continuous learning.
This design enhances resource efficiency and increases the likelihood that participants receive active treatments rather than placebos. However, the relative novelty and structural complexity of platform trials leave substantial room for further methodological development, particularly with respect to foundational statistical principles and error rate control in the presence of multiple, interdependent comparisons\citep{KoenigetAl(2023)}. 

Because statistical inference and regulatory decisions are made after the studies are conducted, platform trials provide a concrete setting in which the conditionality principle is not merely philosophical but operationally essential. In such trials, inference and regulatory decision-making proceed conditional on multiple layers of realized information, including the number of treatment arms, enrollment patterns, interim adaptations, and the performance of the shared control group.

% While the excitement surrounding platform trials is undeniable, they do come with several limitations. 
% For instance, platform trials are not universally applicable to all diseases. Meticulous planning is also necessary to ensure their success.\cite{KoenigetAl(2023)} 
% Nevertheless, despite these challenges, there have been instances of successful implementations of platform trials. One such case is the HEALEY ALS Platform Trial, sponsored by Massachusetts General Hospital\cite{MassGeneral(2024)}.   

A prominent example is the HEALEY ALS Platform Trial, a double-blind, placebo-controlled study evaluating multiple investigational therapies for amyotrophic lateral sclerosis (ALS). In this trial, participants are randomized across treatment regimens, each compared against a shared control group. The shared control structure induces statistical dependence across treatment comparisons and naturally give rise to conditioning event that are central to inference and error rate assessment. These features motivate a formal conditional assessment of IAR control within a platform trial, which we develop in the next section.

\subsection{Conditional assessment of IAR control}

Applying Principle \ref{principle:Contitionality}, inference within a platform trial should be interpreted conditional on realized experiment, including the observed shared outcome and the dependence structure induced by control sharing. In practice, the relevant condition events include the number of regimens, the observed shared control mean, and the correlation structure across treatment comparisons.

To motivate this conditional perspective, 
%the application of this principle to \textit{conditional confidence}, 
consider Fieller's confidence interval for the ratio of two Normal means\citep{Fieller(1954)}. Unconditionally, the interval achieves exact $(1-\alpha)$ coverage across infinitely many replications. However, the form of Fieller's confidence interval depends on the realized data; it may be finite, infinite  (the entire real line), the complement of an interval, or empty (see Casella and Berger (\citeyear{Casella&Berger(1990)}), page 459).
%Depending on the observed sample, the interval may be a finite interval, the entire real line, the complement of an interval, or the empty set (see Casella and Berger\cite{Casella&Berger(1990)}, page 459).
Conditional on the interval  being the entire real line, it conveys no information about the parameter; %and thus the study would correspond to a conditional confidence of 100\% for the study. Conversely, 
conditional on it being empty, the effective confidence is 0\%. These examples illustrate that while unconditional guarantees remain valid, they do not fully describe the evidential strength in the realized experiment.
A complete assessment therefore requires consideration of both the marginal (unconditional) properties and the conditional behavior given the observed data.

Platform trials exhibit this same phenomenon. While a 5\% ERFw within each study guarantees an unconditional IAR of 5\% across an infinite sequence of studies, the evidential strength and error behavior within a particular platform depend on the realized shared control outcome and study configuration. Accordingly, we recommend assessing the conditional IAR within each platform trial, given the observed shared control outcome and realized design features, to ensure that it does not deviate materially from its unconditional guarantee.

\subsection{Conditional Stability of IAR is desirable }

We now study the distribution of erroneous efficacy claims within a single platform trial under the shared control structure. Although unconditional IAR control constrains the expected error rate across repeated experimentation, conditional error behavior may vary across realized platform configurations. Our goal is to characterize this variability formally.

Consider $k$ studies in a platform trial. Without loss of generality, we define efficacy as $\theta_i > 0$ and consider one-sided inference.
Let
\[
L
=
\left\{
i \in \{1,\dots,k\} :
\theta_i = \mu_i - \mu_{SC} \le 0
\right\}
\]
denote the set of ineffective compounds and let $|L|=\ell$. $\mu_{i}$ and $\mu_{SC}$ are true mean of the new compound $i$ and the shared control, respectively. Define 
\[
Z_i = \mathbf{1}\{\text{the confidence interval for } \theta_i \text{ excludes } 0 \text{ in the efficacy direction}\}, \quad i = 1,\dots,k,
\]
so that for each $i \in L$, $Z_i = 1$ corresponds to an erroneous efficacy conclusion (an incorrect approval), 
%and
%\[
%P(Z_i = 1 \mid H_{i0}) = \alpha.
%\]
The total number of incorrect approvals is therefore
\[
V^k = \sum_{i \in L} Z_i.
\]
The distribution of $V^k$ depends on the unknown configuration $L$. 

Since neither the identities nor the number of $\theta_i\le 0$ are known, the exact distribution of $V^k$ cannot be determined without additional assumptions. In what follows, we will obtain a bound $ V^{*}$ that does not depend on $L$, and discuss two specific concerns on variability of $ V^{*}$ associated with sharing a control in a platform trial and the need for conditioning in accordance with the Conditionality Principle.

\subsubsection{Effect of correlation induced by sharing control on variability of \texorpdfstring{$V^{*}$}{V*}}

%Let us say there are 20 new treatments in a platform trial, and 10 of the treatments show \textcolor{red}{no} efficacy.  
%Let $ V^{*} $ denote the number of treatments out of those ten with their confidence intervals for efficacy not covering their true parameter values, then $ E[V^\star] = 0.05 \times 10 = 0.5$.  
%If these new treatment arms were independent, then  $ StdDev(V^\star) $, the standard deviation of $ V^\star $ (under the null) is 0.69 which is larger than the mean $ E[V^\star] = 0.5$.  
%Further, correlation among the test statistics associated with the new compounds (induced by sharing a control) makes the \textit{variability} of $ V^\star $ (the number of incorrect approvals) even bigger.  
%In the scenario above, a common correlation of 0.5 among the test statistics increases $ StdDev(V^\star) $ from 0.69 to 1.16.  
%This ever-present variability of the actual error rate in a particular family of clinical trials is a fact of life that we believe should receive more attention in practice, regardless of the error rate being ERpF or ERFw.  
%We show how this calculation is done below.  
%Previous calculationof variability of $ V $ by \cite{Schweder&Spjotvoll(1982)} is similar.  
% not suggesting additional burden to gain marketing approval
Suppose each confidence interval is constructed from a pivotal statistic  
\begin{equation}\label{eq:Tstar}
	T^\star_i = \dfrac{\hat{\mu}_{i}-\hat{\mu}_{SC}-\theta_i}{SE_{\hat{\mu}_{i}-\hat{\mu}_{SC}}}	
\end{equation}
whose sampling distribution does not depend on the efficacy parameter $\theta_i$. Here $\hat{\mu}_{i}$ and $\hat{\mu}_{SC}$ denote the sample mean for new compound arm $i$ and the shared control arm, respectively.  For proof of concept, consider the setting where 
\[
\hat{\mu}_i \sim N(\mu_i, \frac{\sigma_i^2}{n_i}), 
\qquad
\hat{\mu}_{SC} \sim N(\mu_{SC}, \frac{\sigma_{SC}^2}{n_{SC}}),
\]
Assume further that $\hat{\mu}_i$ and $\hat{\mu}_{SC}$ are independent. Then the estimator of the treatment effect is
\[
\hat{\theta}_i = \hat{\mu}_i - \hat{\mu}_{SC},
\]
with variance
\[
Var(\hat{\theta}_i) = \frac{\sigma_i^2}{n_i} + \frac{\sigma_{SC}^2}{n_{SC}}.
\]
Consequently,
$T^\star_i,i=1,\cdots,k$ have marginal standard normal distribution, and for each pair $i\ne j$,  

\[
(T_i^\star, T_j^\star)
\sim
N_2\!\left(
\begin{pmatrix}0\\0\end{pmatrix},
\begin{pmatrix}
1 & \rho_{ij}\\
\rho_{ij} & 1
\end{pmatrix}
\right).
\]
This scenario is particularly applicable in therapeutic areas like Alzheimer's Disease (AD) and Type 2 Diabetes Mellitus (T2DM), where the primary efficacy endpoints\footnotetext{which excludes the score test version of the logRank test} are well-defined and standardized. 

Let
\[
Z_i^\star = \mathbf{1}\{T_i^\star > z_\alpha\}
\]
denote the indicator of non-coverage for the one-sided $(1-\alpha)$ confidence interval, 
where $z_\alpha$ is the upper $(1-\alpha)$ quantile of the standard normal distribution. 
Under this construction,
\[
P(Z_i^\star = 1) = \alpha.
\]
Define
\[
V^\star = \sum_{i=1}^k Z_i^\star
\]
as the total number of non-coverage events across new compound arms. For any $i \in L$, we have $\theta_i \le 0$, and whenever the confidence interval excludes $0$ in the efficacy direction (i.e., $Z_i = 1$), it necessarily does not cover the true value (i.e., $Z_i^* = 1$).  Hence,
\[
Z_i \le Z_i^* \quad \text{for all } i \in L,
\]
which implies
\[
V^k = \sum_{i \in L} Z_i 
\;\le\; \sum_{i \in L} Z_i^* 
\;\le\; \sum_{i=1}^k Z_i^*
\;=\; V^*.
\]
Therefore, instead of $ V^{k} $, we consider the distribution of $ V^{*}$, an upper bound on the number of compounds that were erroneously inferred to be efficacious.  

Note that $V^*$ counts all one-sided confidence interval noncoverage events in the following cases. Therefore $V^*$ is more conservative as a bound.

\medskip
\noindent
\textbf{Case 1: Misses for efficacious arms.}  
Suppose $i \notin L$, so $\theta_i > 0$.  
If the confidence interval fails to cover the true value $\theta_i$, then
\[
Z_i^* = 1.
\]
However, this event does not correspond to a false approval. Thus, it contributes to $V^*$ but not to $V^k$.

\medskip
\noindent
\textbf{Case 2: False approvals.}  
If $i \in L$ and the confidence interval lies entirely above zero, then
\[
Z_i^* = 1 \quad \text{and} \quad Z_i = 1,
\]
which contributes to both $V^*$ and $V^k$.

The variance of $\operatorname{Var}(V^\star)$ can be written as 
\[
\operatorname{Var}(V^\star)
=
\sum_{i=1}^k \operatorname{Var}(Z_i^\star)
+
2\sum_{1\le i<j\le k} \operatorname{Cov}(Z_i^\star, Z_j^\star),
\]
where
\[
\operatorname{Var}(Z_i^\star) = \alpha(1-\alpha).
\]
Thus the variance of $V^\star$ depends on the covariances $\operatorname{Cov}(Z_i^\star, Z_j^\star)$ for $i \ne j$,  and we have 
\begin{equation}
\operatorname{Cov}(Z_i^\star, Z_j^\star)
=
P(T_i^\star > z_\alpha,\; T_j^\star > z_\alpha)
-
\alpha^2 .
\end{equation}
In platform trials with a shared control arm, the correlation between the standardized treatment effects arises from the common control mean. Assuming independent new compound treatment arms and equal outcome variances, we can easily prove that 
\[
\rho_{ij}
=
\operatorname{Corr}(T_i^\star, T_j^\star)
=
\lambda_i \lambda_j,
\]
where
\[
\lambda_i =
\left(1+\frac{n_{SC}}{n_i}\right)^{-1/2},
\]
and $n_i$ and $n_{SC}$ denote the sample sizes of new compound treatment arm $i$ and the shared control arm, respectively.

Each subset $ L $ has its distinct $ \rho_{ij} $ matrix, and thus $\lambda$ matrices, depending not only on the ratio of sample sizes between the treatment arms with index in $ L $ and the shared control arm,  but also on other covariates like baseline measurement of each patient.  To assess the stability of error rate control for compounds 
apparently with no efficacy, we consider the observed new compound treatment outcomes as well as the covariates as known. Consequently, we compute $ \rho_{ij} = \lambda_i\lambda_j , i \ne j,$ among the compounds in $L$ that show no efficacy.  Such computations are feasible within model-based analytical frameworks.

%Table \ref{tab:VariabilityVstar} illustrates that for a given number of $k$ compounds in a platform trial, %in-efficacious compounds $l$, 
%the standard deviation of $V^\star$ (denoted as $StdDev(V^\star)$),  increases as the correlation coefficient $\rho_{ij}$ increases. For a given common correlation among pairs of new compound treatment arms, $StdDev(V^\star)$ not only exceeds the mean $E(V^\star)$ but also increases with $k$. %Moreover, correlations among the test statistics, caused by sharing a control, further increase the variability of $V^\star$ (the number of incorrect approvals). 
%For example, with a common correlation of 0.5 among the test statistics, $StdDev(V^\star)$ increases from 0.66 to 1.16 when $k$ increases from 5 to 10. This inherent variability in the realized error rate within a given family of platform trials is a significant issue that, we believe, warrants greater attention in practice, irrespective of whether the error rate is measured as ERpF (i.e. IAR) or ERFw.

Table 1 illustrates how correlation among treatment comparisons affects the variability of $V^\star$. As the common correlation $\rho_{ij}$ increases, $StdDev(V^\star)$ increases substantially and can exceed the mean $E(V^\star)$.
For example, when $k=10$ and $\rho_{ij}=0.5$, $StdDev(V^\star)$ increases to 1.16 compared with 0.69 under independence. Moreover, the variability grows rapidly with family size $k$. If $StdDev(V^\star) > E(V^\star)$, the realized number of erroneous claims within a given family may fluctuate widely around its expectation, raising concerns about the stability of ERpF control. While independent studies approach stability when the family size is around $k=20$, this stabilization
does not occur in platform trials due to the correlation induced by the shared control arm.

To facilitate the assessment of the impact of correlations, analogous to how internet commerce tracks \textit{cookies}, we propose maintaining an informational “cookie” for each treatment or regimen in a platform trial. Table \ref{tab:CSM} presents the Control Sharing Matrix (CSM) for the HEALEY ALS Platform Trial, inferred from the results available on clinicaltrials.gov. For each regimen, the diagonal elements represent the number of patients randomized to the active treatment. The off-diagonal elements indicate the number of patients shared in the control group, to which each active treatment is compared, for each pair of regimens. We can then derive Table \ref{tab:cor_mat}, which presents the correlation matrix. Note that as mentioned before, the pairwise correlation in Table \ref{tab:cor_mat} where $\lambda_i = (1+\frac{n_{SC}}{n_{i}})^{-1/2}$. These $\lambda$ values can be readily calculated using existing software, making it preferable for its ease of extension to more complex correlation structures\footnote{For example, when two additional regimens are added later on, we do not need to calculate a six by six correlation matrix, instead, we only need to calculate the pairwise correlation based on the corresponding shared control sample size for each pair of regimens.} For the HEALEY ALS Platform Trial, the approximated $\lambda = $ (0.674, 0.722, 0.671, 0.690). Based on equation (3.2)\footnote{Equation (3.2) assumes all the regimens have the same $n_{SC}$, which may not be the case in real life (e.g., the HEALEY ALS Platform Trial).  In practice, one just uses the best available approximation, and we use the $\lambda$ from the factor-analytic approximation.} we can obtain variance-covariance matrix for $Z^{*}$ (see Table \ref{tab:CorrelationsZ}). Note that the elements sum to 0.2974 for $Var(V^\star)$, resulting in $StdDev(V^\star)$ = 0.5454, which exceeds the ERpF of 0.2.

\subsubsection{Effect of sample mean of shared control on the distribution of conditional \texorpdfstring{$V^{*}$}{V*}}

From the perspective of the Conditionality Principle, inference should be
evaluated conditional on the realized outcome of shared control, since regulatory
decisions are made after this quantity is observed. Recall the standardized
statistic
\[
T_i^\star
=
\frac{\hat{\mu}_i - \hat{\mu}_{SC}-\theta_i}
{\sqrt{\frac{\sigma_i^2}{n_i} + \frac{\sigma_{SC}^2}{n_{SC}}}},
\]
for treatment arm $i$, where
\[
\hat{\mu}_i \sim N\!\left(\mu_i,\frac{\sigma_i^2}{n_i}\right), \quad  \hat{\mu}_{SC} \sim N\!\left(\mu_{SC},\frac{\sigma_{SC}^2}{n_{SC}}\right),
\]
and
\[
\theta_i=\mu_i-\mu_{SC}.
\]
Conditional on $\hat{\mu}_{SC}$, 
\[
\hat{\mu}_i-\hat{\mu}_{SC}
\sim
N\!\left(
\mu_i-\hat{\mu}_{SC},\frac{\sigma_i^2}{n_i}
\right), 
\]
and
\[
\hat{\mu}_i-\hat{\mu}_{SC}-\theta_i
=
\hat{\mu}_i-\hat{\mu}_{SC}-(\mu_i-\mu_{SC}),
\]
we have 
\[
E(\hat{\mu}_i-\hat{\mu}_{SC}-\theta_i\mid\hat{\mu}_{SC})
=
\mu_{SC}-\hat{\mu}_{SC}.
\]
Let
\[
m_i(\hat{\mu}_{SC})
=
\frac{\mu_{SC}-\hat{\mu}_{SC}}
{\sqrt{\frac{\sigma_i^2}{n_i}+\frac{\sigma_{SC}^2}{n_{SC}}}},
\qquad
v_i
=
\frac{\sigma_i^2/n_i}
{\frac{\sigma_i^2}{n_i}+\frac{\sigma_{SC}^2}{n_{SC}}},
\]
we have 
\[
T_i^\star \mid \hat{\mu}_{SC}
\sim
N\!\left(
m_i(\hat{\mu}_{SC}),
v_i
\right).
\]
Notably, the conditional mean of
$T_i^\star$ does not depend on the efficacy parameter $\theta_i$. A larger (or smaller) realized value of $\hat{\mu}_{SC}$ relative
to its expectation $\mu_{SC}$ makes it harder (or easier) for experimental
compounds to demonstrate efficacy, regardless of the true treatment effect.  In this paper, a shared control with
$\hat{\mu}_{SC} > \mu_{SC}$ is referred to as an \textit{over-achieving}
shared control, while $\hat{\mu}_{SC} < \mu_{SC}$ is termed an
\textit{under-performing} shared control.

Conditional on the observed shared control mean $\hat{\mu}_{SC}$, the conditional non-coverage probability for arm $i$ is
\[
p_i(\hat{\mu}_{SC})
:=
P(Z_i^\star=1\mid \hat{\mu}_{SC})
=
P(T_i^\star>z_\alpha\mid \hat{\mu}_{SC}).
\]
so that
\[
p_i(\hat{\mu}_{SC})
=
1-\Phi\!\left(
\frac{z_\alpha-m_i(\hat{\mu}_{SC})}{\sqrt{v_i}}
\right),
\]
where $\Phi(\cdot)$ denotes the cumulative distribution function of the
standard normal distribution.

Conditional on
$\hat{\mu}_{SC}$, the treatment-arm estimators are independent, and hence  $Z_1^\star,\ldots,Z_k^\star$ are conditionally
independent Bernoulli random variables with success probabilities
$p_1(\hat{\mu}_{SC}),\ldots,p_k(\hat{\mu}_{SC})$. Therefore,
\[
V^\star \mid \hat{\mu}_{SC}
\]
follows a Poisson-binomial distribution in general. If, in addition,
\[
\sigma_i=\sigma_{SC}=\sigma,
\qquad
n_i=n_{SC}=n,
\qquad i=1,2,\ldots,k,
\]
then
\[
p_1(\hat{\mu}_{SC})=\cdots=p_k(\hat{\mu}_{SC})=:p(\hat{\mu}_{SC}),
\]
and therefore
\[
V^\star \mid \hat{\mu}_{SC} \sim \mathrm{Binomial}\bigl(k, p(\hat{\mu}_{SC})\bigr),
\]
where
\[
p(\hat{\mu}_{SC})
=
1-\Phi\!\left(
z_\alpha\sqrt{2}-\frac{\mu_{SC}-\hat{\mu}_{SC}}{\sigma}\sqrt{n}
\right).
\]

Conditional on an \textit{over-achieving} $\hat{\mu}_{SC}$,  the probability of 
$V^\star|\hat{\mu}_{SC}$ being towards lower numbers of erroneous approvals increases with $\hat{\mu}_{SC}$. In other words, the higher $\hat{\mu}_{SC}$ becomes, the more challenging it is for treatment arms to receive approval, presenting significant concerns for sponsors seeking treatment approvals. However, for regulators, especially when the treatment arms are non-efficacious, this situation is less problematic as it reduces the likelihood of erroneously approving a treatment.

%An \textit{over-achieving} $\hat{\mu}_{SC}$ presents significant concerns for sponsors seeking treatment approvals. The higher $\hat{\mu}_{SC}$ becomes, the more challenging it is for treatment arms to receive approval. This scenario leads to a high probability of $ V^\star$ being towards lower numbers of erroneous approvals. %resulting in a lower variability of $V^\star$. 
%However, for regulators, especially when the treatment arms are non-efficacious, this situation is less problematic as it reduces the likelihood of erroneously approving a treatment.

Conditional on an \textit{under-performing} $\hat{\mu}_{SC}$, the probability of $V^\star|\hat{\mu}_{SC}$ being towards high numbers of erroneous approvals increases with decreasing $\hat{\mu}_{SC}$. In extreme cases, if $\hat{\mu}_{SC}$ is sufficiently low, almost all treatment arms may be erroneously approved. While sponsors might favor such increased ease of approval,   it is undesirable from a patients and regulatory perspective because it heightens the risk of erroneous approvals.

For illustration, consider a platform trial with $k=5$ treatment arms,
each tested at a one-sided significance level $\alpha=0.05$, yielding
$\text{ERpF}=0.05\times5=0.25$. The total sample size is 1200 with a
$5{:}1$ allocation ratio of treatment to control, which induces an
unconditional correlation of approximately $0.5$ between the estimated
treatment–control differences. For simplicity, we assume the true shared
control mean $\mu_{SC}=0$ and the treatment arm means $\mu_i=0$ for $i=1,2,3,4, 5$.
Figure~1 illustrates the conditional distribution of $V^\star$
for several realized values of $\hat{\mu}_{SC}$.
%These two cases are demonstrated in Figure (\ref{fig:extreme}). Assume $\mu_i = 0$ for $i=1,2,3,4, 5 \in L$, and $\alpha=0.05$. Then, Figure (\ref{fig:extreme}) compares the true probability distribution of $V^{*}|\hat{\mu}_{SC}$ when $\hat{\mu}_{SC}={\mu}_{SC}=0$ (the dash line) with those where $\hat{\mu}_{SC}$ is either greater or less than ${\mu}_{SC}=0$ (the solid line). 
Figure 1a indicates an essentially negligible change in the probability distribution of $V^{*}|\hat{\mu}_{SC}(\text{over-achieving})$.
%Figure 1a indicates an essentially negligible change in the probability of obtaining $V^{*}=1|\hat{\mu}_{SC}$, and the probabilities of obtaining $V^{*}=2,3,4,5|\hat{\mu}_{SC}$ are nearly zero, suggesting minimal impact on the variability of $V^{*}$. 
However, as shown in Figures 1b and 1c, the likelihood of $V^{*}=1|\hat{\mu}_{SC}(\text{under-performing})$ and $V^{*}=2|\hat{\mu}_{SC}(\text{under-performing})$ increases as $\hat{\mu}_{SC}(\text{under-performing})$ decreases, thereby changing the distribution of $V^{*}|\hat{\mu}_{SC}$. These results highlight the importance of examining the full conditional distribution of $V^{*}|\hat{\mu}_{SC}$, rather than relying solely on unconditional error rate or summary statistics.

\begin{figure}[h]
\centering
% \begin{subfigure}{0.33\textwidth}

%         \centering
%         \captionsetup{justification=centering}
%     \includegraphics[width = \textwidth]{Images/probDistSE1.eps}
%     \caption{$\hatmusc = 0.46$ \\1 SE above $\mu_{SC}$}
%     \label{fig:good}
% \end{subfigure}%
% \begin{subfigure}{.33\textwidth}
%         \centering
%         \captionsetup{justification=centering}
%     \includegraphics[width = \textwidth]{Images/probDistSEneg0.5.eps}
%     \caption{$\hatmusc = -0.23$ \\0.5 SE below $\mu_{SC}$ }
%     \label{fig:moderate}
% \end{subfigure}%
% \begin{subfigure}{.33\textwidth}
%         \centering
%         \captionsetup{justification=centering}
%     \includegraphics[width = \textwidth]{Images/probDistSEneg1.5.eps}
%     \caption{$\hatmusc = -0.69$\\ 1.5 SE below $\mu_{SC}$}
%     \label{fig:worrying}
% \end{subfigure}%

\includegraphics[width = \textwidth]{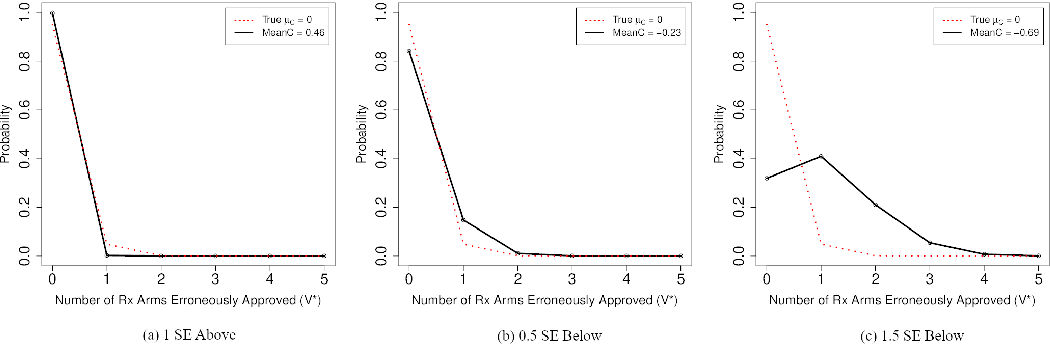}
\label{fig:pdf_vstar}
    \caption{ 
     \textbf{Probability distribution of $V^\star|\hat{\mu}_{SC}$} under a plotform trial with 5 treatment arms, each tested at $\alpha=0.05$, yielding an ERpF of  0.25. The total sample size is 1200, with a 5:1 treatment to control allocation, resulting in an unconditional correlation of 0.5 among treatment-control contrasts. For illustration purpose, the true shared control mean and treatment means are assumed to be $\mu_{SC}=\mu_i=0$. Panels (a)-(c) display the conditional distribution of $V^{*}$ given different values of $\hat{\mu}_{SC}$.}
    %Probability distribution of $V^\star|\hat{\mu}_{SC}$. In this scenario, we have 5 treatment arms, each with an $\alpha$ level of 0.05,
    %resulting in an ERpF of $0.05\times 5 = 0.25$. The total sample size is 1200, with a 5:1 ratio of treatment arm to control sample size allocation, leading to an unconditional correlation of 0.5 between estimates in differences between each new treatment arm and the control.
    %For illustration purpose, we assume that the true shared control mean $\mu_{SC}$ and treatment arm means $\mu_i$ both equal to 0 (although these values can be adjusted).
    %In part a , the distribution of $V^\star|\hat{\mu}_{SC}=0.46$ closely mirrors the underlying true distribution. With a higher $\hatmusc$, it becomes more difficult for treatment arms to gain approval, leading to a negligible probability of $V^\star|\hat{\mu}_{SC} > 0$. Conversely, b and c show scenarios where the $\hatmusc$ is much lower than $\mu_i=0$, facilitating  easier erroneous approvals and leading a substantial increase in probability for $V^\star = 1|\hat{\mu}_{SC}$ and $V^\star = 2|\hat{\mu}_{SC}$.}
    \label{fig:extreme}
\end{figure}

\subsubsection{A Confident Platform Random Sizer (Interactive visualization via Shiny App)}

To facilitate exploration of this conditional distribution, we developed an interactive visualization tool\footnote{Can be found at \url{https://eouya001.shinyapps.io/ConfidentPlatformRandomSizer/}} that implements the Conditionality Principle by displaying the entire distribution of $V^\star|\hat{\mu}_{SC}$, in line with  the FDA draft guidance. Our interactive plot allows users to examine how changes in design parameters affect the entire distribution, highlighting the impact of both over-achieving and under-performing $\hat{\mu}_{SC}$. Users can adjust parameter values interactively and observe in real time how the probabilities of erroneous approvals vary across different realization of $\hat{\mu}_{SC}$, thereby providing a practical way to explore the implications of conditional inference in platform trials.

%providing deeper insights into the variability and risks associated with treatment approvals. This innovation enhances understanding and supports more informed decision-making in platform trials, showcasing the full complexity of the data beyond mere summary statistics.

%An app has been developed to illustrate how conditioning on $\hatmusc$ can impact the distribution of $V^{*}$ and provides assistance in sample size allocation when designing platform trials. %insights into the the unconditional correlation between $T_i^\star$ and $T_j^\star$, the calculated standard errors of sample means of treatments, shared controls and their differences. 
 
In a simulated case study testing the efficacy of five drugs, the total resources allows for the participation of 600 people.
% The ineffectiveness of the four drugs tested in Regimens ABCD could be attributed to either the compound's inherent lack of efficacy or a comparison with an overachieving shared control. Each arm included roughly 120 patients in the treatment group and about 40 in the control group, maintaining a 3:1 ratio of treatment to control. For this example, the app allows for the possibility of erroneously approving up to four treatments.
With this setup, users can adjust the input values using the app's sliders. Notably, ALS has a population standard deviation of 6.5 ($\sigma$ = 6.5), which the app uses internally to calculate standard errors of $\hatmusc, \hat \mu_i$ and $\hat \mu_i - \hatmusc$, $i = 1, \ldots, k
$,
%$\hat \mu_i - \hatmusc$, $i = 1, \ldots, l \in L$, 
where $k$ is the number of treatment arms. %erroneously approved. 
A scenario of particular concern to sponsors is detailed in \textbf{App: User Slider Controls} (Figure \ref{fig:Cui_Figure2})

    By selecting the sample size ratio, the app automatically calculates the unconditional correlation between the new treatment arm and control estimates. This feature is particularly beneficial to regulators, who have a vested interest in managing this aspect to stabilize the distribution of $V^{*}$. A lower correlation achieved by reducing the sample size ratio also decreases  the variation of $V^{*}$.
%By choosing the value for the sample size ratio, the unconditional correlation of the new treatment arm and control estimates gets calculated automatically for the user, as is the case for the ALS study. This is beneficial to regulators in particular, who would have more interest invested in controlling this to stabilize the distribution of the control sample mean. By having lower correlation (and thus lower randomization ratio), $SE_{\hatmusc}$ is lowered as well, making the estimate for less variable.
Therefore, the app can be instrumental in designing new studies. If regulators set a required correlation threshold, sponsors can use the app to determine the optimal sample size ratio randomization that meets this criterion. Based on the sample size allocation, the distribution of $V^{*}$ conditional on the sample mean of the shared control can also be evaluated (Figure \ref{fig:Cui_Figure3}).

%It facilitates considerations regarding the stability of $V^{*}$ conditional on shared controls and the (unconditional) correlation between $T_i^\star$ and $T_j^\star$. 
%Specifically, 
%demonstrate its usage when designing a study, specifically into considerations of stability of the estimates of shared control and (unconditional) correlation between the difference between $T_i^\star$ and $T_j^\star$. In particular, if regulators demand a certain correlation threshold to be met, sponsors are able to determine the highest sample ratio randomization that can achieve said threshold. The app can be used to find the sweet spot where all parties of interest can be satisfied.

% jch 11Apr2024

Within a platform trial, the randomization ratio $ n_{Rx}: n_C $ of the new compound to the shared control can vary from a low of $ 1:1 $ - resulting in a sample size ratio of the shared control to each new compound of $ k:1 $—to a high of $ k:1 $, where the ratio reverses to $1:1$. Given a fixed total sample size for the platform trial, selecting the appropriate randomization ratio involves multiple factors. As shown in Chandereng et al (\citeyear{ChanderengEtAl(2020)}), if minimizing the variance of each individual treatment vs. control $ Rx:C $ comparison were the sole consideration, then as described in Appendix A of FDA (\citeyear{FDA(2023)}), one would set $ n_{Rx}: n_C $ to be essentially $\sqrt{k}$.  
This ratio can be a starting point for all the stakeholders to explore the choice of $ n_{Rx}: n_C $.  
% \cite{ChanderengEtAl(2020)} 

Patients generally prefer a higher $n_{Rx}: n_C$ ratio, as it increases their chances of receiving a new treatment compound. For individual drug sponsors, the advantage of participating in a platform trial includes being able to pool control patients, which allows for a higher randomization ratio than the standard $ 1:1 $ typically seen in stand-alone studies. This higher ratio can facilitate more accurate effect estimations for new compounds.

However, the perspective of the master protocol sponsor tends to favor a lower 
ratio, which increases the sample size of the shared control, reducing the likelihood of an over-achieving control sample that complicates the demonstration of efficacy across new compounds.

From a regulatory standpoint, the stability of the IAR control is influenced by both the shared control's sample means and the correlation among the pivotal statistics $T^{*}$.  A lower ratio reduces the correlation among pivotal statistics, thereby enhancing the stability of IRA control.

In the HEALEY ALS Platform Trial, none of the four drugs tested in Regimens ABCD demonstrated efficacy in the primary endpoint, ALSFRS-R\footnote{Active treatments for Regimen A were Zilucoplan, B was Verdiperstat, C was CNM-Au8, and D was Pridopidine.}. The lack of efficacy could be attributed to either the inherent ineffectiveness of the compounds, an over-achieving shared control, or a combination of both. Conversely, as illustrated in App: Regulators Concerns, with an under-performing shared control sample mean  
%(i.e., $\hat{\mu}_{SC} < 0$), 
there might have been a higher chance of the drugs being approved too readily, as exemplified by a $\hat{\mu}_{SC}$ that is 1.5 standard deviations below the $\mu_{SC}$. Such a scenario would raise significant concerns for regulators (Figure \ref{fig:Cui_Figure4}).

%, so for this family the 4 $\times$ 4 Family Concurrency Marix is a matrix of all zeros

In sum, choice of the randomization ratio $ n_{Rx}: n_C $ should take all these into account.  Our app lets one find the sweet spot.

     \begin{tcolorbox}[title = App: User Slider Controls, 
        every float = \centering]
        \begin{minipage}[t]{0.57\linewidth}
        \vspace*{0pt}
        \begin{boxitemize}
            \item \textbf{$\alpha$ of Each Rx Arm}: %Type I error rate
            ERFw \textit{$\boldsymbol\alpha$} for each treatment arm. Note that ERpF is %$k\alpha$ \\
            $k\alpha$ \\
            \textit{$\alpha = 0.05$ for each treatment arm.}
            \item \textbf{$\boldsymbol{\delta:\mu_{SC} \pm \delta\sigma_{SC}}$:} 
            how many standard deviations is $\hatmusc$ away from its true mean.
            %The shared control sample mean $\hatmusc$, standardized by the standard error of the shared control sample mean gives us the number of standard deviations away from the true mean.\\
            
            \textit{An example of 0.5 SD away from $\mu_{SC}$ 
(an over-achieving $\hatmusc$)
        }

            \item \textbf{Number of Non-Efficacious Rx Arms}: 
            $\ell$, assuming all treatment arms $k$ are non-efficacious, $l=k$.
            
            \textit{An example of $l=k=5$}
            %Assumes $n = k$, i.e. all treatments are non-efficacious.\\
            %\textit{$k = 4$ so set $n = 4$}

            \item \textbf{Total Sample Size}: sample sizes including all treatment arms and the shared control\\
            \textit{Total sample size = 600}
            \item \textbf{Correlation}: $\rho_{ij}$, Correlation between $T_i^\star$ and $T_j^\star$\\
    % \textit{While not necessarily given, this information can be found via the sample size ratio below.}
    \textit{An example of $\rho_{ij} = 0.5$}   
    % \textit{While not necessarily given, this information can be found via the sample size ratio below.}

    \item \textbf{Rx Arm: Control Sample Size Ratio}: Treatment to control sample size ratio. Note that the app allows for one to go higher, but we advise not to go higher than $k:1$. \\
    \textit{An example of 5:1 treatment-to-control sample size ratio}

        \end{boxitemize}
        \end{minipage}\hfill%
        \begin{minipage}[t]{0.4\textwidth}
        \vspace*{0pt}
        \vspace{1.25cm}

            \fbox{\includegraphics[width=\linewidth]{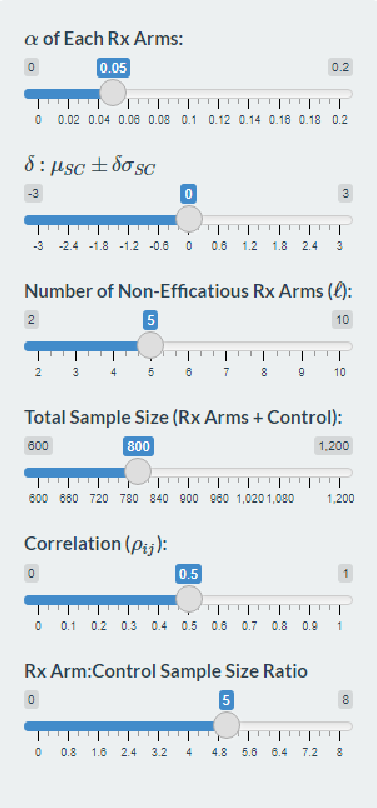}   }
           
            \captionof{figure}{User Slider Controls}
            
 \label{fig:Cui_Figure2}

    \end{minipage}

    \end{tcolorbox}

    \begin{tcolorbox}[title = App: Output,
    every float = \centering]
        \begin{minipage}[t]{0.42\linewidth}
        \vspace*{0pt}
        \begin{boxitemize}
            \item \textbf{Distribution of $V^{*}
            $}: Gives the expected value, standard deviation, and the probability distribution of $V^{*}$, the number of treatment arms erroneously approved. \\
            \textit{With an over-achieving $\hatmusc$, we see that the probability of erroneously approving a treatment arm is low}
            \item \textbf{Sample Allocation}: Gives the recommended sample sizes for each treatment arm and the shared control (rounded to nearest whole number)\\
            \textit{The study utilizes as 500:100 ratio, so a ratio of 5.}
        \end{boxitemize}
    \end{minipage}\hfill%
    \begin{minipage}[t]{0.55\linewidth}
    \vspace*{0pt}
        \fbox{\includegraphics[width=\textwidth]{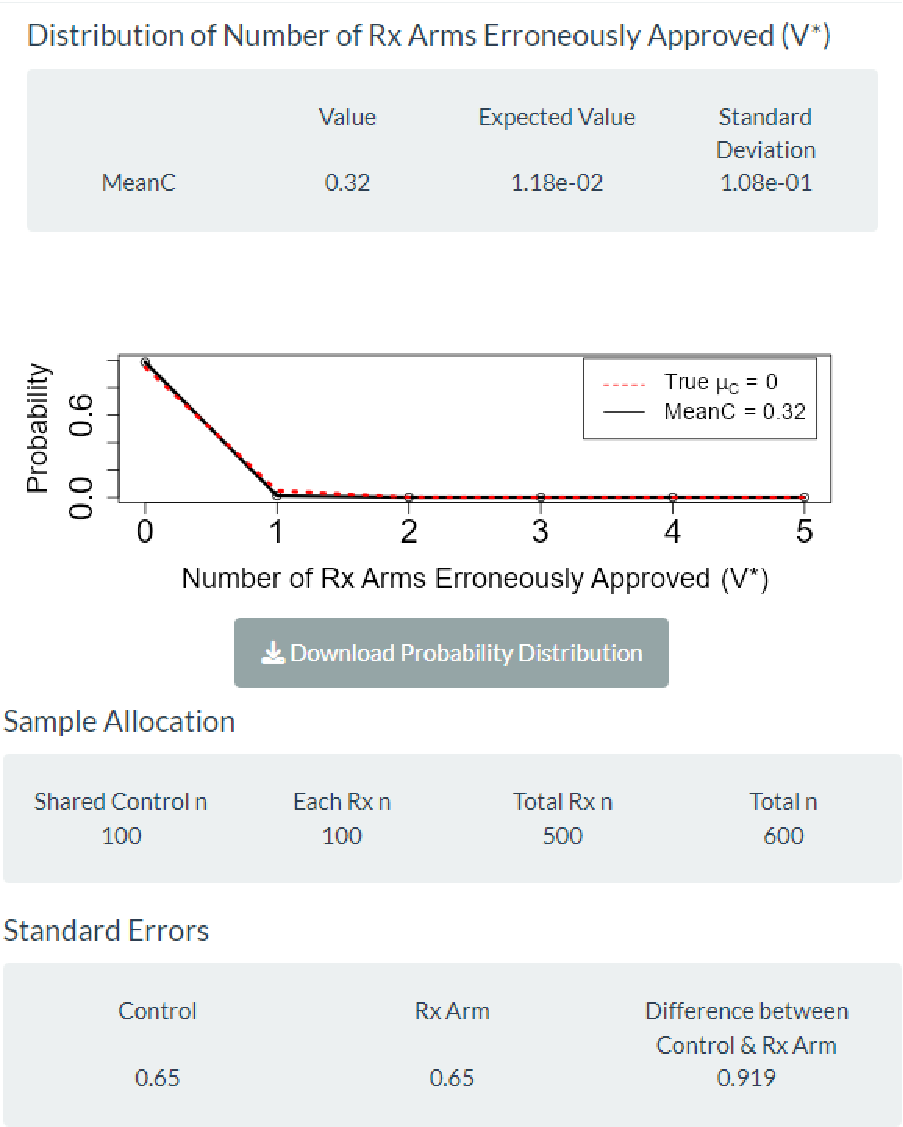}}
        \label{fig:Cui_Figure3}
        \captionof{figure}{App Output}
         \label{fig:Cui_Figure3}
    \end{minipage}
    \begin{boxitemize}
        \item \textbf{Standard Errors}: Gives standard errors for estimated control mean, treatment arm means, and the difference between the sample mean of the shared control and that of the treatment arm\\
        \textit{Standard errors were calculated based on the standard deviation $\sigma$ = 6.5}
    \end{boxitemize}

    \end{tcolorbox}

    \begin{tcolorbox}[title = App: Regulators Concerns, every float = \centering]
        We now present a scenario using our app where sponsors, such as MASS General, need not be concerned, but which should capture the attention of regulators.  The following inputs and outputs illustrate a hypothetical situation:
        %We now demonstrate a situation where sponsors such as MASS General do not need to concern themselves with, but regulators should be concerned via our app. We have the following inputs and outputs from a hypothetical situation:
        
        \begin{minipage}[t]{0.42\linewidth}
        \vspace*{0pt}
        \vspace{1cm}
        \centering
        \textbf{Slider Inputs}
        \justifying
        \begin{boxitemize}
            \item \textbf{$\boldsymbol\alpha$ for Each Rx Arm}: $ 0.05$

            \item \textbf{$\boldsymbol\delta$}: $-1.5$

            \item \textbf{Number of Non-Efficacious Rx Arms}: $\ell = 5$

            \item \textbf{Total Sample Size}: 600

            \item \textbf{Correlation $\boldsymbol{\rho_{ij}}$}: 0.5
            % \textit{Since the sample size of the shared control and 1 treatment arm is the same, correlation is 0.5}

            \item \textbf{Sample Size Ratio}: 5:1\\
        \end{boxitemize}

    \end{minipage}\hfill%
    \begin{minipage}[t]{0.55\linewidth}
    \vspace*{0pt}
        \fbox{\includegraphics[width=\textwidth]{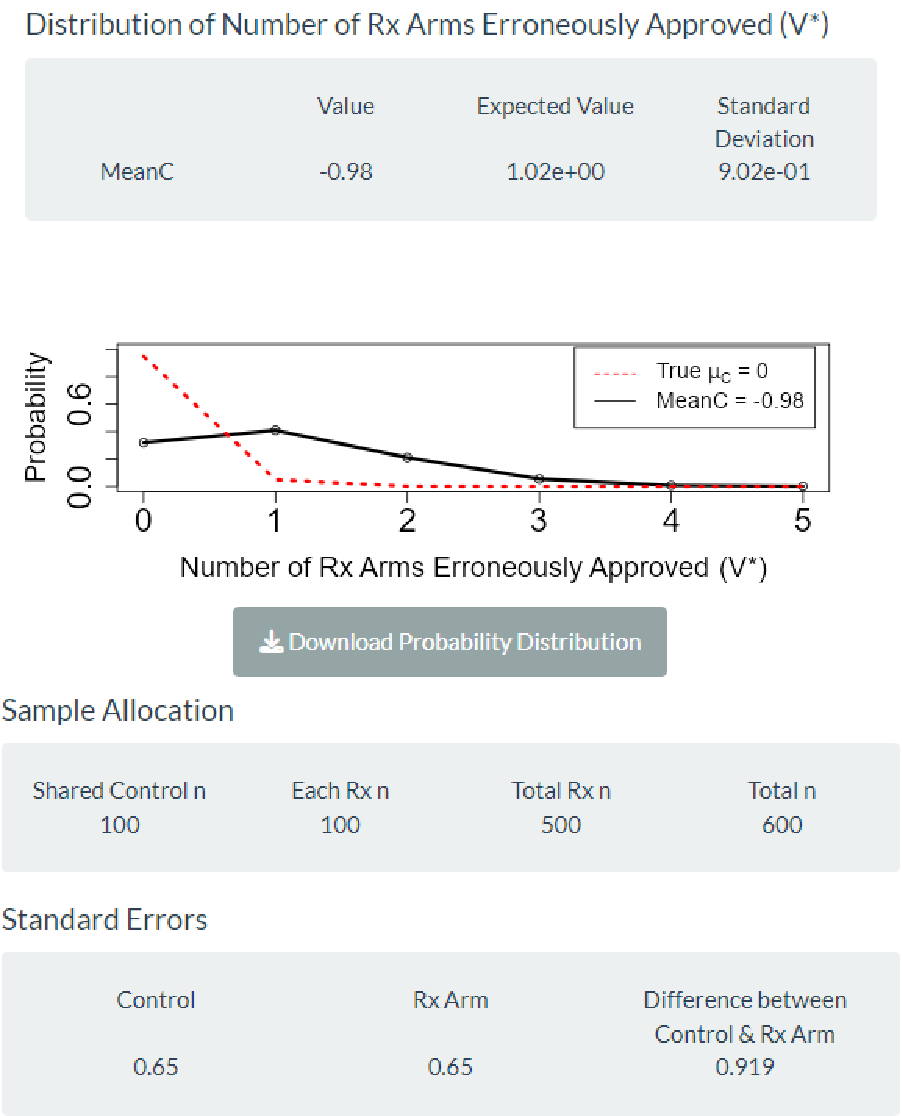}}
          \captionof{figure}{App Output}
              \label{fig:Cui_Figure4}
    \end{minipage}
    \vspace{1em}

    The probability distribution of $V^{*}$
  indicates that an under-performing $\hatmusc$ significantly increases the likelihood of erroneously approving at least one treatment, compared to its true efficacy. Regulators should be particularly concerned about this, as it suggests a non-negligible probability that ineffective drugs could mistakenly reach the market.
  
            %We see from the probability distribution of $V^{*}$ that when we have an under-performing standardized $\hatmusc$, there is a notable inflation of the probability of erroneously approving at least one treatment compared to its true value. This should be of concern to regulators since drugs that are not effective may end up on the market mistakenly due to this with a non-negligible probability. 
    \end{tcolorbox}
\section{False Discovery Rate (FDR) in a regulatory setting}
\label{Sec:FDR}

\subsection{FDR would change policy}

False Discovery Rate (FDR) control has been proposed as a method for controlling the Type I error rate across studies in Platform trials \citep{RobertsonEtAl(2023)}.  
FDR is defined as the expected proportion of incorrect rejections (rejecting a true null hypothesis) among all rejections, both 
correct and incorrect\footnote{Rejecting a false null hypothesis is considered ``correct'', while rejecting a true null hypothesis is ``incorrect''.}. 

Let $ H_{0i}, i = 1, \ldots, k $ 
represent the null hypotheses that each compound in a Platform trial with $ k $  studies lacks efficacy, then 
\begin{eqnarray}
	\mathrm{FDR} % & = & 
	%			\mathrm{E} \left [ \frac{ \mbox{number of false rejections}}{\mbox{total number of rejections}} \right ] \nonumber\\
	& = &
	\mathrm{E} \left [ \frac{ \mbox{number of rejected } H_{0i} \mbox{ that are true}}
	{\mbox{total number of  rejected } H_{0i}} \right ] \label{eq:FDR} \\
	& = &
	\mathrm{E} \left [ \frac{ \mbox{number of in-efficacious compounds incorrectly approved }}
	{\mbox{total number of compounds approved, both efficacious and in-efficacious}} \right ]. \nonumber		
\end{eqnarray}
While current regulatory practice of controlling IAR (ERpF) across studies limits the number of in-efficacious compounds incorrectly approved to 5-per-100, % we show that 
controlling FDR in Platform trials % instead 
would increase that number to beyond 5-per-100 as we will show below.

Recall $ L $ represent the subset of index $ \{1, \ldots, k\} $ for compounds that lack efficacy and $|L|=l$.  For each $ i \in L $,  $ Z_i $ is the indicator that the $ i^{th} $ confidence interval erroneously shows efficacy with $P(Z_i=1)$, and %does not cover its true parameter value 
 \begin{equation*}
 V^k= \sum_{i \in L} Z_i.
 \end{equation*}
Then
\begin{eqnarray}
\dfrac{\mathrm{EFpF}}{l} & = & \dfrac{E(V^k)}{l}\\
&=&\dfrac{\sum_{i = 1}^{l} 
P\left \{ Z_i=1 \right \} }{l}\\
&=&\sum_{v = 1}^{l} \dfrac{v}{l} P\{ V^k = v \} 
 \label{eq.ERpFpercent} 
\end{eqnarray}

%Suppose a Platform trial has $ n^{\mathrm{true}} $ true no-efficacy nulls $ H_{0i}^{{\scriptsize \mbox{true}}}, i = 1, \ldots, n^{\mathrm{true}} $,  
%the expected number of incorrect rejections (incorrect approvals) is just 
%\begin{eqnarray}
%	\mathrm{ERpF} & = & \sum_{i = 1}^{n^{\mathrm{true}}} 
%	P\left \{ \mbox{Reject } H_{0i}^{{\scriptsize \mbox{true}}} \right \} 
%	\label{eq.ERpFsum} 
%\end{eqnarray}

%So to compute $E[V]$ we add up the Type I error rate of each study. 
%If FDR were fixed, then to do this calculation we can try reverse-engineering 
%$ P\left \{ \mbox{Reject } H_{0i}^{{\scriptsize \mbox{true}}} \right \} $ and adding them up.  
%For easier subsequent comparison with FDR (which is a fraction), we can divide ERpF by the total number of drug compounds that should not be approved  
%\begin{eqnarray}
%	\dfrac{\mathrm{EFpF}}{n^{\mathrm{true}}} & = & \sum_{v = 1}^{n^{\mathrm{true}}} \dfrac{v}{n^{\mathrm{true}}} P\{ V = v \} 
%	 \label{eq.ERpFpercent} 
%\end{eqnarray}
Now suppose a Platform trial has $ n^{\mathrm{false}} $ extremely false no-efficacy nulls (that will almost surely get rejected), then 
\begin{eqnarray}
	\mathrm{FDR} & = & \sum_{v = 1}^{l} \dfrac{v}{v+ n^{\mathrm{false}} } P\{ V^k = v \} \label{eq.FDRdiscount} \\
	& \le & \sum_{v = 1}^{l} \dfrac{v}{0 + n^{\mathrm{false}} } P\{ V^k= v \}. \label{eq.FDRupper}
\end{eqnarray}
How ERpF and FDR count the number of incorrect approvals $ v $ 
differently is evident by comparing (\ref{eq.FDRdiscount}) with (\ref{eq.ERpFpercent}).  
FDR attenuates the number of incorrect rejection $ v $ by the number of correct rejection of false nulls which we denote by $R$.  
%So we can expect the reverse-engineered Type I error rates $ P\left \{ \mbox{Reject } H_{0i}^{{\scriptsize \mbox{true}}} \right \} $ to be higher if FDR were to be controlled.  
%If FDR is set at the $ 5\% $ (5-per-100), then it would not be surprising for ERpF to be higher than that.  

From (\ref{eq.FDRupper}), we see $ n^{\mathrm{false}} $ being not small (relative to $ v $) allows inflation of 
%$\left \{ \mbox{Reject } H_{0i}^{{\scriptsize \mbox{true}}} \right \} $ 
$P \left \{ Z_i=1 \right\} $ 
while keeping FDR small.  
For simplicity, suppose each Platform trial has $ k = 2 $ studies only, and suppose the compound in Study $ w $ is highly efficacious but the compound in Study $ v $ is in-efficacious: 
\begin{itemize}
	\item $ H_{0w}^{{\scriptsize \mbox{false}}} $ is so false that $ P\left \{ \mbox{Reject } H_{0w}^{{\scriptsize \mbox{false}}} \right \} = 
	% \approx 
	1 $; 
	\item $ H_{0v}^{{\scriptsize \mbox{true}}} $ is true so rejecting it constitutes the only incorrect rejection. 	
\end{itemize}
In this scenario, $ H_{0v}^{{\scriptsize \mbox{true}}} $ can be tested at a Type I error rate (ERFw) of $ 10\% $ (within Study $ v $) while keeping FDR in this Platform trial of a family of 2 studies at $ 5\% $, because 
\begin{eqnarray}
	FDR & = & E \left [ \frac{V}{R} \right ] \nonumber \\
	& = & \left ( \dfrac{1}{1 + 1} \right ) \times P\left \{ \mbox{Reject } H_{0v}^{{\scriptsize \mbox{true}}} \right \} \\
	& = & \frac{1}{2} \times 10\%  = 5\% \nonumber
\end{eqnarray}
So, if there is a sequence of many Platform trials with $ k = 2 $ studies following this scenario, then by the LLN (\ref{thm:SLLNcorrelated}), the Incorrect Approval Rate (IAR) is 10-per-100 (not 5-per-100).  %in the long run.  

%In a Platform trial with 39 studies with 20 lacking efficacy and 19 extremely efficacious, 
%while current regulatory practice controlling ERpF at 5-per-100 limits the number of incorrectly approved in-efficacious compounds to no more than 5-per-100 on average, the following calculation shows switching to controlling FDR at $ 5\% $ would add $ P\{V \ge 2\} $ to that  
%\begin{eqnarray}
%	FDR & = & 5\%\nonumber \\
%		& = & \sum_{v = 1}^{20} \dfrac{v}{v+19} P\{ V = v \} \nonumber\\
%		& = & \dfrac{1}{1+19} P\{ V = 1 \} + \sum_{v = 2}^{20} \dfrac{v}{v+20} P\{ V = v \} \nonumber\\
%		& \le & E[V] + \sum_{v = 2}^{20} \dfrac{v}{v+20} P\{ V = v \} \\
%		& \le & \dfrac{1}{1+19} + \dfrac{20}{20+19} \sum_{v = 2}^{20} P\{ V = v \} \label{eq.FDRexcess}
%\end{eqnarray}
%so $ P\{ V \ge 2 \} \ge 2 (\frac{1}{20} - \frac{1}{21}) \approx 0.005$.  
%While this excess seems small, the inequality in (\ref{eq.FDRexcess}) is crude, so the excess in real life is surely larger.  

%Switching to FDR control would thus be a policy change: For every efficacious compound approved, regulators reward the industry with an \textit{additional} fungible Incorrect Approval Rate (FAR) token\footnote{We term the tokens \textit{fungible} because any company can cash them in for any compound.} toward approval of in-efficacious compounds.  
%We raise this caution as we do not explicitly see in the Food, Drug, \& Cosmetic Act or the Kefauver-Harris Amendment language indicating that the number of efficacious compounds correctly approved affects the guarding of the number of in-efficacious compounds incorrectly approved.

Adopting FDR control would represent a significant policy shift. Under this system, for every efficacious compound approved, regulators would essentially reward the industry with an \textit{additional} fungible False Approval Rate (FAR) token\footnote{We describe the tokens as fungible because any company can use them to seek approval for any compound.} that could be used toward the approval of inefficacious compounds. It is important to note that we find no explicit language in the Food, Drug, $\&$ Cosmetic Act or the Kefauver-Harris Amendment suggesting that the number of efficacious compounds correctly approved affects the guarding of the number of in-efficacious compounds incorrectly approved.

\subsection{Integrity of error rates against manipulation is desirable}

Whereas the IAR is only concerned with the rate at which in-efficacious compounds 
% that should not be approved 
are incorrectly approved, clearly the issue with FDR is its denominator contains both correct and incorrect approvals of compounds.  Similar to the concern in Section 6 of Finner and Roter(\citeyear{Finner&Roter(2001)}), we wonder whether that might  subject FDR to manipulation as follows.  

If FDR control were implemented across platform trials involving multiple drug companies. It seems plausible that a strategic company might hesitate to submit their best compound, preferring instead to let other companies take that risk. Similarly, in platform trials with a single sponsor, there could be a temptation to include highly efficacious compounds that have low commercial value. This strategy could potentially be used to facilitate the approval of less efficacious but more profitable drugs.

%be controlled across studies in Platform trials involving multiple drug companies, seems to us an astute company would not put forth their best compound, leaving that to other companies.  
%Even in Platform trials involving a single sponsor, we wonder whether there would be temptation to include highly efficacious compounds of low commercial value, to benefit compounds with low efficacy but high commercial value.  

% Controlling the false discovery rate (FDR) is popular in discovery studies.  
 
% \begin{figure}[h!]
% \centering
%       \begin{minipage}[c]{0.4\textwidth}
%     \fbox{\includegraphics[width=\textwidth]{Images/underachievingALS.eps}}
%   \end{minipage}\hfill%
%   \begin{minipage}[c]{0.55\textwidth}
%     \caption{ Probability distribution of $V^{*}$ for HEALEY ALS trial with an under performing shared control sample mean}
%        %Although none of drugs in Regimens ABCD in HEALEY ALS trial were determined as efficacious, if we had an under performing shared control sample mean, i.e. $\hatmusc < 0$, there would be a chance drugs could be approved with ease as shown above with a $\hatmusc^\star = -1.5$. This would be a situation where regulators would be concerned (See App: Regulator Concerns)
%     \label{fig:underperforminALS}
%   \end{minipage}
% \end{figure}

%--------------------------------------------------------------------------------------

\subsection{Struggles of FDR Not Restricted to Platform Trials}

%% INCLUDE THE CONDITIONALITY PRINCIPLE

FDR control also encounters serious limitations in other settings. In large-scale Genome-Wide Association Studies (GWAS), each hypothesis typically tests whether a SNP has no association with the disease outcome. Classical FDR methods assume that most hypotheses are truly null because then controlling FDR limits the number of erroneous findings among the reported results.
%A particular striking example is in the genome wide association study (GWAS) setting, where thousands to millions of single-nucleotide polymorphisms (SNPs) are tested simultaneously. In many large‐scale GWAS settings, the hypotheses are formulated as an SNP has no association with the disease outcome. Under the assumption that the majority of hypotheses are truly null, i.e. there is no significant association, the FDR is interpreted as the expected proportion of false discoveries among the rejected hypotheses. 
% This failure is attributed to a failure to condition on the realized linkage disequilibrium structure.
% However, in practice, once a SNP interacts with a biological pathway influenced by even one causal SNP, the complete null becomes statistically false for every SNP. 
However, as demonstrated in \citet{DingEtAl(2018)} and further elaborated in \citet{DingEtAl(2021)}, due to linkage disequilibrium (LD), SNPs in close proximity are highly correlated. Consequently, when a single causal SNP influences a biological pathway, many neighboring SNPs may also appear statistically significantly associated with the disease outcome. In such settings, the complete null hypothesis may be false for a large proportion of SNPs, leading to a so-called “zero-null” scenario,  %once a SNP interacts with a biological pathway influenced by even one causal SNP, many statistically significant SNPs may appear. Not only is this redundant manifestations of the same underlying causal SNP, under this instance, FDR becomes artificially small as the complete null hypothesis is false for nearly all SNPs.
%FDR aims to control the expected proportion of false positives. However, under the zero null setting, we have that
and further resulting in small number of false discoveries $V$ relative to the total number of rejections $R$, and therefore
\[
\mathrm{FDR} = \mathbb{E}\left[\frac{V}{R}\right] \approx 0
\] automatically.
Obviously, controlling FDR doesn't tell us much about the quality of discoveries and a coherent alternative was proposed as a per-panel error rate \citep{DingEtAl(2018),DingEtAl(2021)}.

\subsubsection{Crosstalk plots: A Visual Tool for GWAS SNP Behavior}

To further illustrate why FDR becomes difficult to interpret in GWAS, we introduce crosstalk plots as a geometric representation of SNP dependence. To construct a crosstalk plot, select two SNPs, designating one as the \emph{conditioning} SNP, $C$ and the other as the \emph{response} or \emph{target} SNP, $T$. This designation is purely analytical and does not imply causality. The sample is partitioned according to the genotype of $C$, where $C \in \{0,1,2\}$ denote the number of minor alleles i.e., homozygous major, heterozygous, and homozygous minor. These three groups form the layers of a crosstalk plot. 

Within each stratum of $C$, we compute the conditional genotype distribution of the three genotypes of $T$. Each layer has a probability vector that sums to one, allowing it to be represented as a barycentric point on one of the three planes corresponding to the genotype strata of $C$. Connecting the three barycentric points across layers yields a trajectory whose geometry reflects dependence between the two SNPs. Figure \ref{fig:toyside} and \ref{fig:toytop} provide toy examples illustrating how the plot can demonstrate independence and dependence appearance and can be translated into a 2D projection form, similar to that seen in \citet{DingEtAl(2021)}. %:

\begin{figure}[h]
    \begin{subfigure}{0.4\linewidth}
        \centering
        \includegraphics[width=\linewidth]{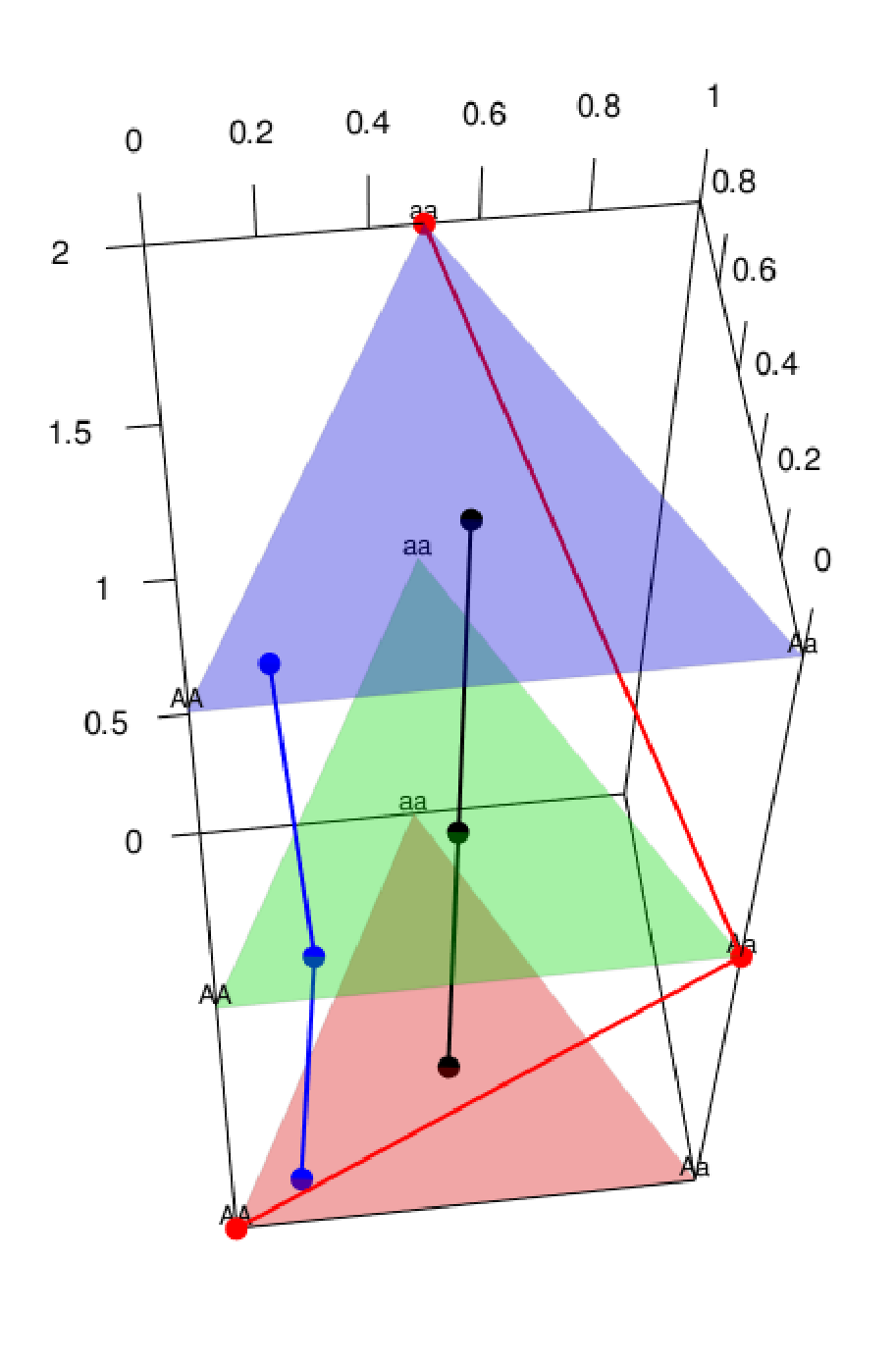}
        \caption{Side view}
        \label{fig:toyside}
    \end{subfigure}%
    \begin{subfigure}{0.6\linewidth}
    \centering
        \includegraphics[width=\linewidth]{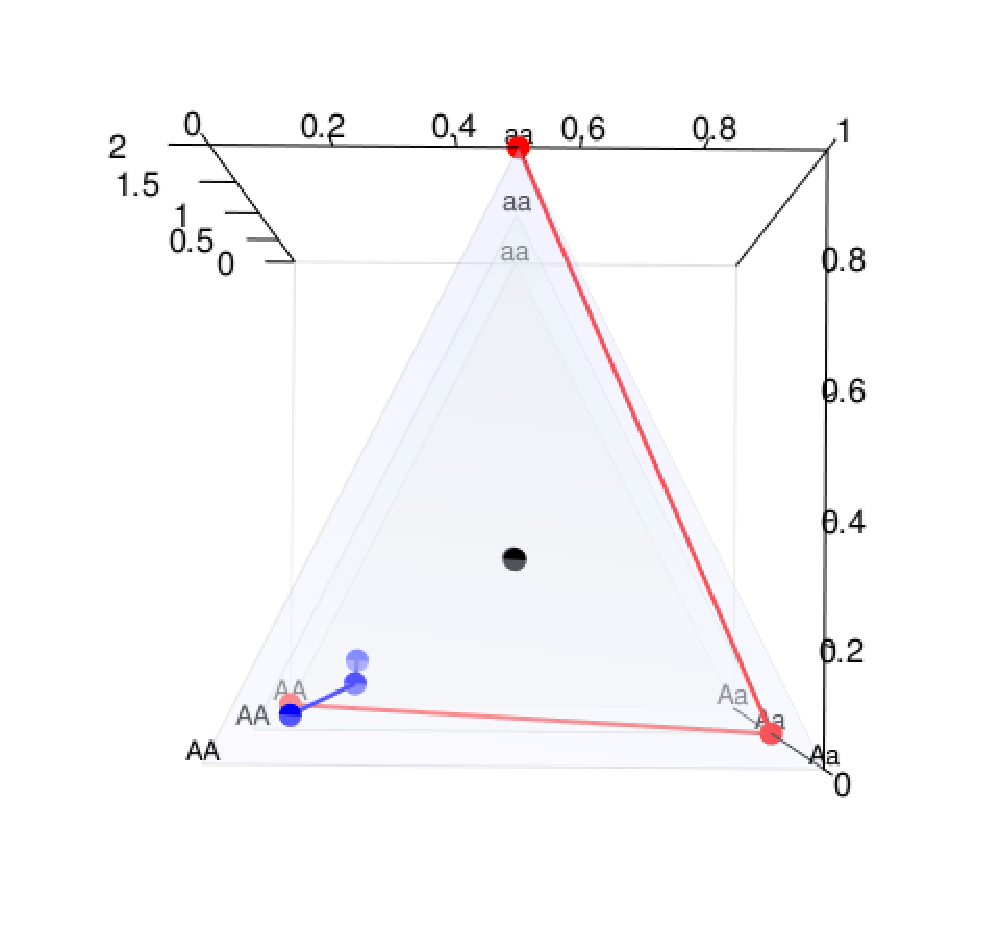}
        \caption{Top-down view}
        \label{fig:toytop}
    \end{subfigure}
    
   \caption{\small Plotting of three different toy examples. Panel (a) depicts the side view, where vertical alignment corresponds to geometric orthogonality across ternary layers, and panel (b) depicts the corresponding top-down projection used in the original two-dimensional plot. Each colored trajectory connects the conditional genotype distribution of a target SNP across the three genotype strata of a conditioning SNP, which is able to be plotted on stacked ternary planes corresponding to $C = 0, 1, 2$ due to them summing up to 1. The black trajectory represents an independent SNP; by having identical conditional probabilities for each strata connecting each point forms a line orthogonal to each ternary plane that collapses to a single point under top-down projection in (b). As such, a single point in the projected view corresponds geometrically to orthogonality and probabilistically to independence. The blue trajectory depicts mild dependence, where connecting the small shifts across strata with a line produces a smaller V-shape in the projected view. The red trajectory shows strong dependence, where the connection between the large shifts in conditional probabilities across the strata generates a pronounced V-shape. 
}
\label{fig:ct_sideview}
\end{figure}

% Figure \ref{fig:ct_sideview} illustrates how geometric structure in the three-dimensional crosstalk plot corresponds to probabilistic dependence between two SNPs. 

A single dependent SNP produces one V-shaped trajectory under projection; when multiple SNPs are displayed relative to the same causal SNP, these individual V-shaped trajectories accumulate into larger geometric patterns. This is one of the advantages of crosstalk plots over many existing SNP visualization methods as they allow multiple SNPs to be layered within the same geometric framework. For example, a mosaic plot is limited to displaying a single contingency table at a time, so each SNP pair must be examined separately. In contrast, a crosstalk plot represents each SNP as a trajectory of conditional genotype probabilities across strata of a conditioning SNP, allowing many SNPs to be displayed simultaneously within one ternary coordinate system. This makes it possible to compare the behavior of all SNPs relative to the same conditioning structure and to detect common patterns that would otherwise be difficult to recognize across many separate plots. In Figure \ref{fig:chickenfeet}, we demonstrate this using the \textit{apolipoprotein E}  (APOE) locus, which has been central to Alzheimer disease research for decades. Variation at this locus is defined by two key SNPs, \texttt{rs429358} and \texttt{rs7412}, which give rise to three common APOE isoforms (APOE2, APOE3, APOE4) \citep{belloy2019apoe}. Since these variants jointly determine APOE isoforms, they provide a natural case study for Crosstalk plots. 

\begin{figure}[!]
    
    \centering
    \includegraphics[width=0.45\linewidth]{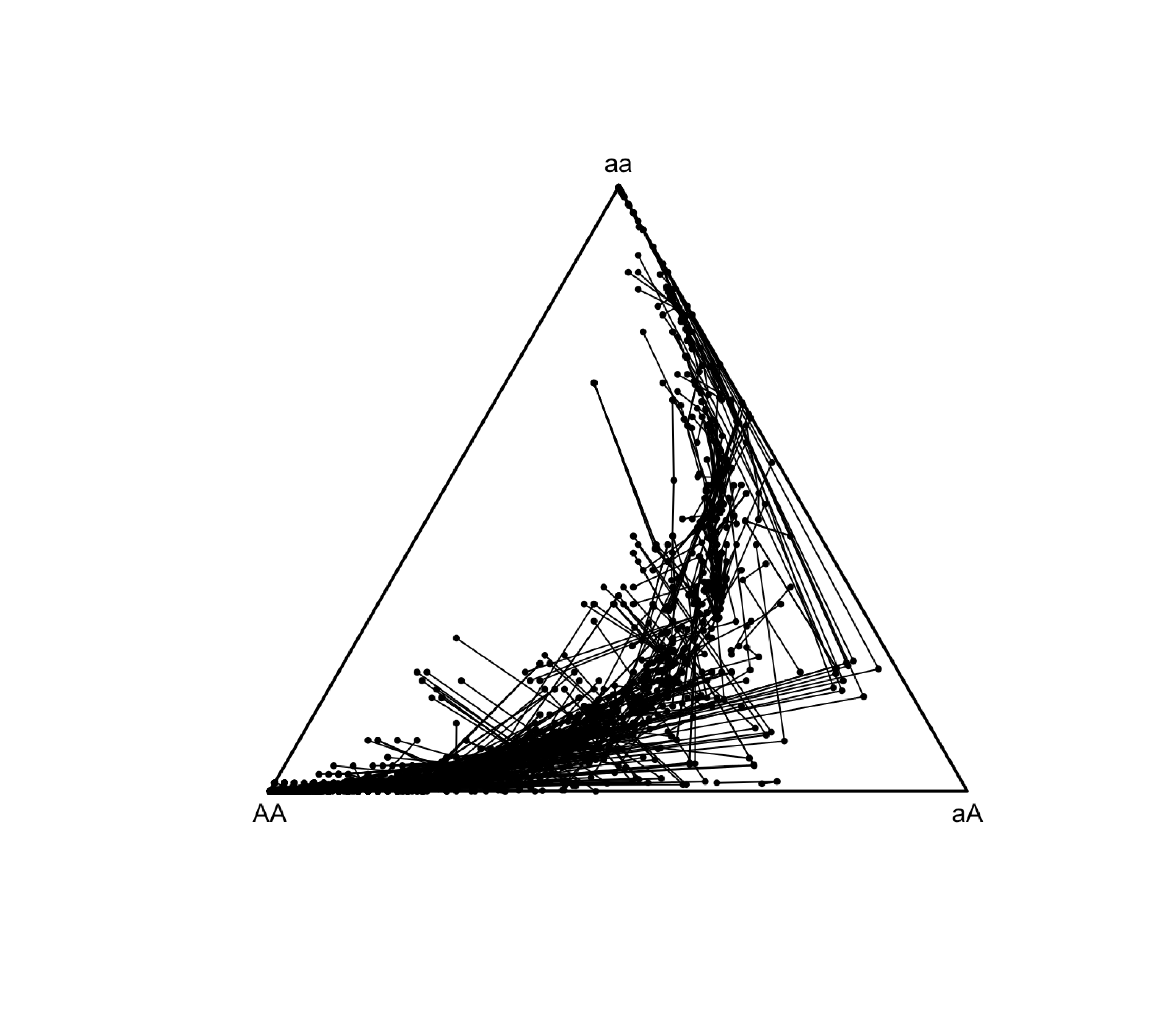}

    \caption{\justifying \small When many SNPs are layered together, shared dependence patterns become visually apparent that are not evident when examining individual SNP pairs. Using reference data from \citet{1000genomes2015}, we designate the \textit{apolipoprotein E} SNP \texttt{rs429358} as the causal SNP on which multiple SNPs are conditioned on. If neighboring markers were independent of the causal SNP, most projected trajectories would collapse to isolated points. However, the trajectories form repeated V-shapes, following a common directional trend and leaving a characteristic curved white space due to constraints imposed by the Hardy-Weinberg equilibrium (that the proportion of $Aa$ is $2pq$ with $P\{A\} = p$ and $P\{a\} = q$) together with dependence induced by linkage disequilibrium, indicating that the SNPs are not behaving independently but are shifting systematically across genotype strata. This geometric clustering has direct inferential implications in GWAS. Once one SNP is truly associated with an outcome, many dependent SNPs in the same region also become statistically associated because they carry partial statistical information about the same underlying signal as previously noted by \citet{DingEtAl(2021)}. As a result, large numbers of markers may appear significant even though they do not represent distinct biological discoveries. In such settings, false discovery rate (FDR) becomes difficult to interpret: the number of rejected hypotheses may become very large, not because many independent signals are present, but because one dependence structure generates repeated manifestations of the same association. Crosstalk plots make this phenomenon visible by showing that many significant SNPs often trace the same geometric pattern, reflecting underlying dependence structure rather than representing separate independent effects, underscoring why FDR can lose interpretability in GWAS.}
    \label{fig:chickenfeet}
\end{figure}

% ---------------------------------------------------------------------------

\section{Discussion}
\label{Sec:discussion}

While the applications of the conditionality principle and layering of error rates were focused on platform trials, there are numerous applications elsewhere. In fact, the effects of implicit conditioning are already evident in regulatory practice and drug labeling. A drug label is only constructed after an approval decision has been made. The act of labeling therefore occurs after conditioning on primary endpoint success, and the drug label represents a summary of evidence given that condition; additional claims must be supported by inference that is valid given primary end point success. Accordingly, regulators do not ask whether secondary endpoints perform marginally across all trials, but only in the paths that have achieved success on the primary endpoint. 

Similarly, the layering of error rates naturally arises because decisions are made both at the trial level and across the broader regulatory landscape. Within a specific study, incorrect conclusions about any endpoint may lead to erroneous claims, requiring strict control. We therefore want to ensure that the ERFw is controlled, i.e. the probability of making at least one incorrect inference among the endpoints in the trial does not exceed the prespecified level. At the higher layer, regulators also desire control of the error rate with respect to the approval of a drug compound, corresponding to IAR.

We also note that crosstalk plots have additional uses outside those listed in this paper. Such usages include visualizing linkage disequilibrium of a SNP.

% {\color{red}
% Multiple endpoints studies typically focus primary, secondary, and exploratory endpoints, forming a within-study family. However, there are also instances where researchers are interested in the overall drug performance. As such consideration of both error types is essential.
% }  
% {\color{blue} Q: what do you mean by overall drug performance? That might need a clear definition?}

\section{Conclusion}

The Conditionality Principle does not weaken frequentist guarantees; it rather aligns inference with the experiment that actually occurred.
\begin{enumerate}
	\item Among error rates defined by Tukey (1953), Error Rate Familywise (ERFw) applies within a clinical study, while Error Rate per Family (ERpF) applies \textit{across} studies.  
	\item An appropriate version of the Law of Large Numbers shows controlling ERFw \textit{within} a clinical study at $ 5\% $ in turn controls ERpF \textit{across} studies at 5-per-100, regardless of whether the studies are in Platform trials or not.  
	\item With ongoing regulatory practice controlling ERpF \textit{across} studies at 5-per-100 (via controlling the ERFw of each clinical study at $ 5\% $, the False Approval Rate (FAR) is being controlled at 5-per-100 (even in Platform trials), allowing only five incorrect approvals per 100 drug compounds that lack efficacy.  
	\item False Approval Rate (FAR), the proportion of drug compounds lacking efficacy incorrectly approved by regulators, should not be increased with increasing number of efficacious compounds approved (according to our understanding of the U.S. Food, Drug, \& Cosmetic Act and its Amendment).  
	\item Controlling the False Discovery Rate (FDR) \textit{across} studies at $ 5\% $ would make FAR more than 5-per-100, a change of (inter)national policies more than statistical that we believe would require harmonizsation/harmonization.  
\end{enumerate}  

In this paper, we examined the error rates within clinical studies, particularly focusing on platform trials. Our findings offer significant insights into the control of error rates and their implications across various study designs, especially in the context of regulatory and clinical settings.  

We demonstrated that controlling the ERFw within a single study at a threshold of $5\%$ effectively controls the ERpF across multiple studies at a rate of 5-per-100. This finding holds irrespective of the correlation between studies, including those grouped within platform trials. Furthermore, our analysis supports that current regulatory practices, which involve controlling the ERFw within a study, limit the False Approval Rate (FAR) to 5-per-100. This approach ensures that no more than five ineffective drug compounds receive approval per hundred evaluated, thus safeguarding public health while fostering innovation. 

 A significant advantage of a Platform trial is that multiple treatments are evaluated simultaneously against a shared control, which allows for fewer total participants, and greater allocation of participants to the active treatment within each regimen, relative to traditional designs with 1:1 randomization. However, the stability of regulatory FAR across Platform trials is of serious concern since shared controls introduce dependencies among treatment evaluations and therefore increases the variability of ERpf. As we demonstrated in our analysis, dependencies among treatment evaluations increases with the treatment/control randomization ratio. From patients perspective, a higher treatment/control randomization ratio will allow higher chance of receiving a new treatment compound. From an individual drug sponsor’s point of view, a higher treatment/control randomization ratio through participating platform trial allows more accurate estimate on the effects of new compounds. However, from the master protocol sponsor’s point of view,  lower the treatment/control randomization ratio allows increasing the sample size of the shared control, and therefore lessens the chance that an overachieving shared control sample causes all the new compounds have difficulty to demonstrate efficacy. Additionally, from the regulators’ point of view, stability of ERpF control is affected by both stability of the sample mean of the shared control as well as correlation among the pivotal statistics. A lower randomization ratio increases sample size of the shared control, which lessens the chance of an under-performing shared control sample giving all the new compounds an easy pass. A lower randomization ratio also decreases correlation among the pivotal statistics, which makes ERpF control more stable. To balance the different perspective from different stake holder, our Confident RandomSizer app offers a tool to identify optimal treatment/control randomization ratio given a fixed sample size. 

In conclusion, maintaining stringent control of error rates within clinical trials is crucial for ensuring the reliability of study outcomes and the safety of approved treatments. As clinical trial designs become increasingly complex, particularly with the rise of platform trials, the statistical community and regulatory bodies must continuously adapt and refine their methodologies to address these challenges effectively.
\nocite{MassGeneral(2024)}
\nocite{FDA(2023)}

\section{Conflict of Interest}

X.C. and E.Ou declare no competing interests. Y. L. is an employee and shareholder at Nektar Therapeutics, San Francisco, CA, USA. J.Y.S and H. T. are employees and shareholders of BeOne Medicines (San Mateo, CA, USA). B.W. is an employee at IO Biotech (Copenhagen, Denmark). J.C. H. is a professor emeritus of the Ohio State University and served as a consultant to BeOne Medicines and Eli Lilly.
\section{Glossary}
\subsection{Terms}
\begin{itemize}
    \item \textbf{ALS}: Amyotrophic lateral sclerosis (ALS)/Lou Gehrig’s disease; a fatal progressive neuromuscular disease
    \item \textbf{ERpF}: Error Rate per Family; measures errors across studies
    \item \textbf{ERFw}: Error Rate Family wise; error within each study
    \item \textbf{FDR}: False Discovery Rate; expected proportion of rejections that are incorrect
    \item \textbf{FWER}: Family Wise Error Rate; probability of type I error in a family
        \item \textbf{IAR}: Incorrect Approval Rate; proportion of drug compounds that lack efficacy incorrectly approved
\end{itemize}

\subsection{Notation}
\begin{itemize}
    \item $\mu_{SC}$: Shared control true sample mean
    \item $\hatmusc$: Shared control sample mean estimate
    \item $X$: Indicator that type I error is committed as a result of $i$th study 
    \item $V$: The number of Type I errors committed in a family
    \item $Z$: Indicator that $i$th confidence interval erroneously shows efficacy
    \item $V^\star$:  The number of treatments with their confidence intervals for efficacy not covering their true parameter values given $n$ confidence intervals
    \item $SE$: Standard error
    \begin{itemize}
        \item Subscripts denote standard error of respective term
    \end{itemize}
    \item $k$: Number of treatment arms
    \item $\ell$: Number of non-efficatious treatment arms
    \item $n$: Maximum number of treatment arms erroneously approved
    \item $T^\star_i$: Statistic for comparing $k$ new compounds against shared control
    \[T^\star_i = \dfrac{\hat{\mu}_{i}-\hat{\mu}_{SC}}{SE_{\hat{\mu}_{i}-\hat{\mu}_{SC}}}\]
    \item $R$: the number of correct rejection of false nulls
    \item $n_{SC}$: sample size for shared control
    \item $n_i$: sample size for $i$th treatment
    \item $\lambda_i$: Have that \[\lambda_i = \left(1+\frac{n_{SC}}{n_{i}}\right)^{-\frac{1}{2}}\]
    \item $\rho_{ij}$: Correlation between sample statistics $T_i$ and $T_j$
    \begin{itemize}
        \item Note: Can be written as a product of $\lambda_i$ and $\lambda_j$.
    \end{itemize}
    
\end{itemize}

%\bibliographystyle{natbib}
% \bibliographystyle{apalike}
%\bibliographystyle{Chicago}

% SIM Requirement
% \bibliographystyle{ama}

% \bibliographystyle{plain}

\bibliography{refsPlatform}
% \bibliography{C:/Users/jch/Documents/TeX/refs}
% \bibliography{C:/Users/hsu.1/Documents/TeX/refs}

\begin{table}[tbh]
	\centering 
	\begin{tabular}{|c|c| c c c|}
		\hline
		$k$ & $E(V^\star$) &	\multicolumn{3}{|c|}{$ StdDev(V^\star) $ } \\ 
        & (Unconditional ERpF) & Independent studies & Hypothetical & Platform $k:1$ randomization \\
		 &  & $ \rho_{ij} = 0$ & $ \rho_{ij} = 0.3$ & $\rho_{ij} = 0.5$ \\
		\hline
		5 & 0.25 & 0.49 & 0.57 & 0.66 \\
		\hline
		10 & 0.5 & 0.69 & 0.94 & 1.16 \\
		\hline
		20 & 1 & 0.97 & 1.65 & 2.15 \\
		\hline
		40 & 2 & 1.38 & 3.02 & 4.12 \\
		\hline
	\end{tabular}
    \caption{Mean and standard deviation of $V^\star$ for different family sizes $k$ under varying correlation structures among test statistics $T_i^\star$. Independence corresponds to $\rho_{ij}=0$, while the correlated scenarios assume a common correlation $\rho_{ij}=0.3$ or $\rho_{ij}=0.5$ among all pairs of treatment comparisons. The case $\rho_{ij}=0.5$ corresponds to a platform trial with $k$ experimental compounds using $k:1$ compound-to-placebo randomization, which induces correlation among $T_i^\star$ through the shared control arm. The expected value $E(V^\star)$ corresponds to the unconditional ERpF level with ERpF set to $5\%\times k$.}\label{tab:VariabilityVstar}
	%\caption{For a family of studies, if $ StdDev(V^\star) > E(V^\star)$, that is, standard deviation of the number of errors is larger than the mean of the number of errors, then one might consider stability of the ERpF control to be of concern. In a Platform trial, pairs of $T^\star_i$ and $T^\star_j$ are correlated due to sharing of Control patients.  Assuming $\rho_{ij}$ is common for all pairs, variability of $ V^\star $ increases with increasing $ \rho_{ij}$.  For illustration, consider a Platform trial with $k$ new compounds using a $k:1$ compound-to-placebo randomization (which makes the sample size for comparing each new compound with the shared control $1:1$).  With ERpF set at $ 5\% \times l$ where $ l $ is the number of in-efficacious compounds, we see independent studies reach stability ($\approx 1$) by a minimum of family size 20, but that is not the case for studies in platform trials.  Stability of ERpF control is thus more of a concern in platform trials.}\label{tab:VariabilityVstar} 
\end{table}

% \begin{description}
% \item[Control Sharing Matrix]Each treatment or regimen should have a ``cookie" that records the exact time interval of its enrollment period. By identifying the overlap between the enrollment periods of different regimens, it becomes possible to compute the pairwise covariance and correlation for comparisons between the active treatment and the control.
% \end{description}

\begin{table}
    \centering
    \begin{tabular}{|c|c|c|c|c|}
    \hline
        $k$ & A & B & C & D\\
        \hline
	A  & \textbf{122} & 122 & 164 & 103 \\
 	\hline
        B & 122 & \textbf{120} & 122 & 103 \\
        \hline
        C & 164 & 122 & \textbf{120} & 103 \\
        \hline
        D & 103 & 103 & 103 & \textbf{75} \\
        \hline
    \end{tabular}
    \caption{Control Sharing Matrix (CSM) for the HEALEY ALS Platform Trial}
    \label{tab:CSM}
\end{table}

\begin{table}
    \centering
    \begin{tabular}{|c|c|c|c|c|}
    \hline
          & A & B & C & D \\
        \hline
       A  & 1 & 0.498 & 0.425 & 0.478 \\
       \hline
       B  & 0.498 & 1 & 0.496 & 0.476 \\
       \hline
       C  & 0.425 & 0.496 & 1 & 0.476 \\
       \hline
       D  & 0.478 & 0.476 & 0.476 & 1 \\
       \hline
    \end{tabular}
    \caption{Correlations matrix between $\hat{\mu}_{i}-\hat{\mu}_{SC_{ij}}$ and $\hat{\mu}_{j}-\hat{\mu}_{SC_{ij}}$}for the HEALEY ALS Platform Trial
    \label{tab:cor_mat}
\end{table}

\begin{table}[tbh]
    \centering
    \begin{tabular}{|c|c|c|c|c|}
    \hline
          & A & B & C & D \\
    \hline
       A  & 0.0475 & 0.0093 & 0.0083 & 0.0087 \\
    \hline
       B  & 0.0093 & 0.0475 & 0.0092 & 0.0096 \\
    \hline
       C  & 0.0083  & 0.0092 & 0.0475 & 0.0086 \\
    \hline
       D  & 0.0087 & 0.0096 & 0.0086 & 0.0475 \\
    \hline
    \end{tabular}
    \caption{Matrix of variances of $Z_i$ (diagonal) and covariances between $Z_i$ and $Z_j$ (off-diagonal) with ERpF set at $ 5\% \times 4 = 0.2$ for Regimens ABCD of the ALS Platform trial.}
    \label{tab:CorrelationsZ}
\end{table}
\end{document}